\let\doendproof\endproof
\renewcommand\endproof{~\hfill\qed\doendproof}
\spnewtheorem*{remark*}{Remark}{\bfseries}{\itshape}
\let\oldwidehat\widehat
\DeclareRobustCommand{\widehat}[1]{\,\oldwidehat{\mkern-6mu #1 \mkern-4mu}\,}
\newcommand{\NE}{\widehat{\mathit{NE}}}
\newcommand{\tw}{\mathrm{tw}}
\newcommand{\OPT}{\mathit{OPT}}
\newcommand{\polylog}{\mathrm{polylog\;}}
\newcommand{\comment}[1]{}
\title{Near-Linear Query Complexity for Graph Inference}
\author{Sampath Kannan\inst{1}
\and Claire Mathieu\inst{2}
\and Hang Zhou\inst{2}}
\date{}
\institute{
    Department of Computer and Information Science \\
    University of Pennsylvania,
    Philadelphia, PA, USA. \\
    \texttt{kannan@cis.upenn.edu}
\and
    D\'epartement d'Informatique UMR CNRS 8548, \\
    \'Ecole Normale Sup\'erieure,  Paris, France\\
    \texttt{\{cmathieu,hangzhou\}@di.ens.fr}
}
\begin{document}

\maketitle

\begin{abstract}
How efficiently can we find an unknown graph using  distance or shortest path queries between its vertices? Let $G = (V,E)$ be an unweighted, connected graph of bounded degree. The edge set $E$ is initially unknown, and the graph can be accessed using a \emph{distance oracle}, which receives a pair of vertices $(u,v)$ and returns the distance between $u$ and $v$. In the \emph{verification} problem, we are given a hypothetical graph $\hat G = (V,\hat E)$ and want to check whether $G$ is equal to $\hat G$. We analyze a natural greedy algorithm and prove that it uses $n^{1+o(1)}$ distance queries. In the more difficult \emph{reconstruction} problem, $\hat G$ is not given, and the goal is to find the graph $G$. If the graph can be accessed using a \emph{shortest path oracle}, which returns not just the distance but an actual shortest path between $u$ and $v$, we show that extending the idea of greedy gives a reconstruction algorithm that uses $n^{1+o(1)}$ shortest path queries. When the graph has bounded treewidth, we further bound the query complexity of the greedy algorithms for both problems by $\tilde O(n)$. When the graph is chordal, we provide a randomized algorithm for reconstruction using $\tilde O(n)$ distance queries.
\end{abstract}

\section{Introduction}
How efficiently can we find an unknown graph using distance or shortest path queries between its vertices?
This is a natural theoretical question from the standpoint of recovery of hidden information.
This question is related to the \emph{reconstruction} of Internet networks.
Discovering the topology of the Internet is a crucial step for building accurate network models and designing efficient algorithms for Internet applications.
Yet, this topology can be extremely difficult to find, due to the dynamic structure of the network and to the lack of centralized control. 
The network reconstruction problem has been studied extensively~\cite{achlioptas2009bias,Beerliova,dall2006exploring,erlebach2006network,mathieu2013graph,Tarissan:2009:EMC:1719850.1719893}.
Sometimes we have some idea of what the network should be like, based perhaps on its state at some past time, and we want to check whether our image of the network is correct.
This is network \emph{verification} and has received attention recently~\cite{Beerliova,Castro04networktomography,erlebach2006network}.
This is an important task for routing, error detection, or ensuring service-level agreement (SLA) compliance, etc.
For example, Internet service providers (ISPs) offer their customers services that require quality of service (QoS) guarantees, such as voice over IP services, and thus need to check regularly whether the networks are correct.

The topology of Internet networks can be investigated at the router and autonomous system (AS) level, where the set of routers (ASs) and their physical connections (peering relations) are the vertices and edges of a graph, respectively.
Traditionally, we use tools such as traceroute and mtrace to infer the network topology.
These tools generate path information between a pair of vertices.
It is a common and reasonably accurate assumption that the generated path is the shortest one, i.e., minimizes the hop distance between that pair.
In our first theoretical model, we assume that we have access to any pair of vertices and get in return their shortest path in the graph.
Sometimes routers block traceroute and mtrace requests (e.g., due to privacy and security concerns), thus the inference of topology can only rely on delay information.
In our second theoretical model, we assume that we get in return the hop distance between a pair of vertices.
The second model was introduced by Mathieu and Zhou~\cite{mathieu2013graph}.

Graph inference using queries that reveal partial information has been studied extensively in different contexts, independently stemming from a number of applications.
Beerliova et al.~\cite{Beerliova} studied network verification and reconstruction using an oracle, which, upon receiving a node $q$, returns all shortest paths from $q$ to all other nodes, instead of one shortest path between a pair of nodes as in our first model.
Erlebach et al.~\cite{erlebach2006network} studied network verification and reconstruction using an oracle which, upon receiving a node $q$, returns the distances from $q$ to all other nodes in the graph, instead of the distance between a pair of nodes as in our second model.
They showed that minimizing the number of queries for verification is NP-hard and admits an $O(\log n)$-approximation algorithm.
In the {\em network realization} problem, we are given distances between certain pairs of vertices and asked to determine the sparsest graph (in the unweighted case) or the graph of least total weight that realizes these distances. This problem was shown to be NP-hard \cite{CGGS00}.
In evolutionary biology, the goal is to reconstruct evolutionary trees, thus the hidden graph has a tree structure. See for example~\cite{hein1989optimal,King,reyzin2007longest}.
One may query a pair of species and get in return the distance between them in the (unknown) tree.
In our reconstruction problem, we allow the hidden graph to have an arbitrary connected topology, not necessarily a tree structure.

\subsection{The Problem}
Let $G=(V,E)$ be a hidden graph that is connected and unweighted, where $|V|=n$.
We consider two query oracles. 
A \emph{shortest path oracle} receives a pair $(u,v)\in V^2$ and returns a shortest path between $u$ and $v$.
A \emph{distance oracle} receives a pair $(u,v)\in V^2$ and returns the number of edges on a shortest path between $u$ and $v$.

In the \emph{graph reconstruction} problem, we are given the vertex set $V$ and have access to either a distance oracle or a shortest path oracle.
The goal is to find every edge in $E$.

In the \emph{graph verification} problem, again we are given $V$ and have access to one of the two oracles.
In addition, we are given an unweighted, connected graph $\hat G=(V,\hat E)$.
The goal is to check whether $\hat G$ is correct, that is, whether $\hat G=G$.

The efficiency of an algorithm is measured by its \emph{query complexity}\footnote{Expected query complexity in the case of randomized algorithms.}, i.e., the number of queries to an oracle.
We focus on query complexity, while all our algorithms are of polynomial time and space.
We note that $O(n^2)$ queries are enough for both reconstruction and verification using a distance oracle or a shortest path oracle: we only need to query every pair of vertices.

Let $\Delta$ denote the maximum degree of any vertex in the graph $G$.
Unless otherwise stated, we assume that $\Delta$ is bounded, which is reasonable for real networks that we want to reconstruct or verify. 
Indeed, when $\Delta$ is $\Omega (n)$, both reconstruction and verification require $\Omega(n^2)$ distance or shortest path queries, see Section~\ref{sec:lower-bound-unbounded}.\footnote{We note that the $\Omega(n^2)$ lower bound holds even when the graph is restricted to chordal or to bounded treewidth.}

Let us focus on bounded degree graphs.
It is not hard to see that $\Omega(n)$ queries are required.
The central question in this line of work is therefore: {\bf Is the query complexity linear, quadratic, or somewhere in between?}
In~\cite{mathieu2013graph}, Mathieu and Zhou provide a first answer: the query complexity for reconstruction using a distance oracle  is subquadratic: $\tilde{O}(n^{3/2})$.
In this paper, we show that the query complexity for reconstruction using a shortest path oracle or verification using a distance oracle is near-linear: $n^{1+o(1)}$.

\begin{table}[t]
\centering
\caption{Results (for bounded degree graphs). New results are in bold.}
\label{table:results}
\begin{tabular}{|l|l|}
  \hline
   {\em Objective} & {\em Query complexity}\\
  \hline
  \Gape[7pt]{verification using a distance oracle} &
  \multirowcell{2}[0pt][l]{
    \;\bfseries\boldmath$n^{1+o(1)}$\\
    \;\bfseries bounded treewidth:  $\tilde O(n)$ \\
    \;\bfseries (Thm~\ref{thm:verify-dist} and Thm~\ref{thm:recons-sp})
  }\\
  %
  \cline{1-1}
  \Gape[8pt]{reconstruction using a shortest path oracle\;} &\\
  \hline
  reconstruction using a distance oracle &
    \Gape[5pt]{\makecell[l]{
      \;$\tilde O(n^{3/2})$\quad\cite{mathieu2013graph} \\ 
      \;{\bfseries\boldmath$\Omega(n\log n/\log\log n)$\quad(Thm~\ref{thm:information-lower-bound})} \\
      \;outerplanar: $\tilde {O}(n)$\quad\cite{mathieu2013graph}\\
      \;{\bfseries\boldmath chordal: $\tilde {O}(n)$\quad(Thm~\ref{thm:recons-chordal})}}
    }\\
  \hline
\end{tabular}
\end{table}

\subsection{Our Results}
\label{sec:results}
\subsubsection{Verification.}
\begin{theorem}
\label{thm:verify-dist}
For graph verification using a distance oracle, there is a deterministic algorithm (Algorithm~\ref{alg:verif-greedy}) with query complexity $n^{1+O\left(\sqrt{(\log \log n+\log \Delta)/\log n}\right)}$, which is $n^{1+o(1)}$ when the maximum degree $\Delta=n^{o(1)}$. 
If the graph has treewidth $\tw$, the query complexity can be further bounded by $O(\Delta(\Delta+\tw\log n)n\log^2 n)$, which is $\tilde{O}(n)$ when $\Delta$ and $\tw$ are $O(\polylog n)$. 
\end{theorem}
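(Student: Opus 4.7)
The plan is to analyze the greedy verification algorithm (Algorithm~\ref{alg:verif-greedy}) by splitting the task $G=\hat G$ into the two inclusions (a) $\hat E\subseteq E$ and (b) $E\subseteq \hat E$. Part (a) is cheap: query $d_G(u,v)$ for every $(u,v)\in\hat E$ and check the answer is $1$, using only $|\hat E|\le \Delta n/2$ queries. The work, and the source of the $n^{1+o(1)}$ factor, lies entirely in (b), i.e.\ certifying that $G$ has no extra edge.

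As a warm-up I would fix a root $r$, run BFS in $\hat G$ from $r$, and query $d_G(r,v)$ for every $v$; if any answer disagrees with $d_{\hat G}(r,v)$ we reject, otherwise we have established that the BFS layers from $r$ coincide in $G$ and $\hat G$. Any extra edge of $G$ must therefore connect two vertices whose layers differ by at most one. The greedy algorithm then iterates: for each vertex $v$ it maintains a shrinking candidate set $C_v$ of vertices that could possibly be an undiscovered neighbor of $v$ in $G$, and drives $|C_v|$ down by querying distances from $v$ to a few cleverly chosen pivots. The metric test used is standard: if $u\in N_G(v)$ then $|d_G(u,w)-d_G(v,w)|\le 1$ for every $w$, so a pivot $w$ at the right distance slices $C_v$ into a small fraction of its previous size. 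When $C_v$ becomes a subset of $N_{\hat G}(v)$, vertex $v$ is certified.

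For the general $n^{1+o(1)}$ bound I would structure the pivots as a recursive balanced partition of $\hat G$ of depth~$k$: at each of the $k$ levels the current subinstance is hit by roughly $n^{1/k}$ pivots, and the per-vertex verification cost blows up by a factor of $(\Delta\log n)^{O(k)}$. Optimizing $k=\Theta\!\bigl(\sqrt{\log n/(\log\log n+\log\Delta)}\bigr)$ yields the announced exponent $n^{1+O\!\bigl(\sqrt{(\log\log n+\log\Delta)/\log n}\bigr)}$, which is $n^{1+o(1)}$ whenever $\Delta=n^{o(1)}$. For the bounded-treewidth improvement I would replace the generic balanced partition by the canonical separator hierarchy of a tree decomposition of width $\tw$: each separator has only $O(\tw)$ vertices, the recursion has depth $O(\log n)$, and at every node the separator itself plays the role of the pivot set. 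Plugging this in gives a per-vertex cost of $O(\Delta(\Delta+\tw\log n))$ and an overall $\log^2 n$ factor from the two layers of recursion/iteration, resulting in the stated $O(\Delta(\Delta+\tw\log n)n\log^2 n)$ bound.

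The main obstacle I expect is the correctness analysis of the candidate-shrinking step: I need to show that after the pivots of a given level have been queried, any vertex still in $C_v$ must either be a genuine neighbor of $v$ in $\hat G$ or act as an explicit witness of a discrepancy that the algorithm can flag. This hinges on the pivots actually separating $v$ from all but a small fraction of the graph, which is delicate in the general (bounded-degree) case and forces a \emph{weight-balanced} rather than cardinality-balanced partition. A secondary subtlety is bookkeeping across vertices so that pivot queries are shared and the total rather than per-vertex cost matches the claimed bounds; this amortization is what ultimately turns the per-vertex $n^{o(1)}$ overhead into a global $n^{1+o(1)}$ query complexity.
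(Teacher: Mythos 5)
Your proposal diverges from the paper in a way that leaves the central link of the argument unsupplied, and it also misdescribes the algorithm it claims to analyze.

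First, Algorithm~\ref{alg:verif-greedy} does not maintain per-vertex candidate sets $C_v$, does not pick a BFS root, and does not shrink $C_v$ by pivots. It maintains a single global set $Y$ of already-confirmed non-edges and repeatedly queries the pair $(u,v)$ maximizing $|S_{u,v}\setminus Y|$, where $S_{u,v}$ is exactly the set of pairs $\{a,b\}$ whose presence as an edge would create a shortcut on a shortest $u$-to-$v$ path (Equation~\eqref{eqn:Suv}). The rule you describe (slice $C_v$ by a pivot $w$ using $|d_G(u,w)-d_G(v,w)|\le 1$) is a different algorithm, so even a correct analysis of it would not prove the theorem as stated. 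More importantly, you never bridge between ``a greedy choice'' and ``near-optimality.'' The paper's essential step is the reduction of non-edge verification to \textsc{Set-Cover} (Lemmas~\ref{lem:Suv} and~\ref{lem:cover}), which gives Theorem~\ref{thm:verify-set-cover}: greedy uses at most $\Delta n + (\ln n + 1)\cdot \OPT$ queries. This is what decouples the analysis of Algorithm~\ref{alg:verif-greedy} from any particular partitioning scheme; the recursive construction is then only needed to \emph{upper-bound $\OPT$}, not to run it. Without the \textsc{Set-Cover} reduction, your outline has no way to convert a clever query schedule into a guarantee on the greedy algorithm.

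Second, even taking your recursive pivot scheme as a proposed bound on $\OPT$, the genuinely hard point is missing. When you recurse into a cell $U'\subsetneq V$ and query a pair $(u,v)$ with $u,v\in U'$, the oracle returns $\delta_G(u,v)$, not $\delta_{G[U']}(u,v)$, so the recursion does not compose in the naive way. The paper handles this for general graphs by allowing centers \emph{outside} the current cell, keeping subcells inside, and proving a non-obvious identity $D_a = \hat D_a$ (Lemma~\ref{lem:Da-same}). For the treewidth case the paper introduces weighted virtual edges among separator vertices precisely to fix the metric inside recursive subproblems. Your outline does not acknowledge this obstacle and, as written, the claim that ``pivots slice $C_v$'' would be unsound in the general case. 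Also, your instinct that a weight-balanced rather than cardinality-balanced partition is required is misplaced: the paper's center-selection (Algorithm~\ref{alg:subset-center}, Lemma~\ref{lem:subset-center}) is a cardinality balance on the Voronoi cells, and the blow-up from $\hat C$ to $\hat D_a$ is controlled purely by the degree bound $\Delta$, not by weights. Your choice of the optimal recursion depth $k=\Theta\bigl(\sqrt{\log n/(\log\log n + \log\Delta)}\bigr)$ matches the paper, but the path to it requires the \textsc{Set-Cover} reduction plus the $D_a = \hat D_a$ lemma, neither of which your proposal contains.
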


The main task for verification is to confirm the \emph{non-edges} of the graph.
Algorithm~\ref{alg:verif-greedy} is greedy: every time it makes a query that confirms the largest number of non-edges that are not yet confirmed.
To analyze the algorithm, first, we show that its query complexity is $O(\log n)$ times the optimal number of queries $\OPT$ for verification.
This is based on a reduction to the \textsc{Set-Cover} problem, see Section~\ref{sec:verif-set-cover}.
It only remains to bound $\OPT$.

To bound $\OPT$ and get the first statement in Theorem~\ref{thm:verify-dist}, it is enough to prove the desired bound for a different verification algorithm.
This algorithm is a more sophisticated recursive version of the algorithm in~\cite{mathieu2013graph}. 
Recursion is a challenge because, when we query the pair $(u,v)$ in a recursive subgraph, the oracle returns the distance between $u$ and $v$ in the entire graph, not just within the subgraph.
Thus new ideas are introduced for the algorithmic design.
See Section~\ref{sec:proof-verify-bound-OPT}.
\comment{
As in~\cite{mathieu2013graph}, this algorithm selects a set of \emph{centers} partitioning $V$ into Voronoi cells and expands them slightly so as to cover all edges of $E$.
But unlike~\cite{mathieu2013graph}, instead of using exhaustive search inside each cell, the algorithm verifies each cell recursively.
The recursion is non-trivial because the distance oracle returns the distance in the original graph, not in the cell.
Straightforward attempts to use recursion lead either to subcells that do not cover every edge of the cell, or to excessively large subcells.
Our approach is to allow selection of centers \emph{outside} the cell, while still limiting the subcells to being contained \emph{inside} the cell (Figure~\ref{fig:two-levels}).
This simple but subtle setup is the main novelty of the algorithmic design.
In terms of analysis, the main technical novelty consists in Lemma~\ref{lem:Da-same}.
}

To show the second statement in Theorem~\ref{thm:verify-dist}, similarly, we design another recursive verification algorithm with query complexity $\tilde O(n)$ for graphs of bounded treewidth.
The algorithm uses some bag of a tree decomposition to separate the graph into balanced subgraphs, and then recursively verifies each subgraph. The same obstacle to recursion occurs.
Our approach here is to add a few weighted edges to each subgraph in order to preserve the distance metric.
See Section~\ref{sec:verify-bounded-width}.

We note that each query to a distance oracle can be simulated by the same query to a shortest path oracle.
So from Theorem~\ref{thm:verify-dist}, we have:

\begin{corollary}
\label{cor:verify-sp}
For graph verification using a shortest path oracle, Algorithm~\ref{alg:verif-greedy} achieves the same query complexity as in Theorem~\ref{thm:verify-dist}.
\end{corollary}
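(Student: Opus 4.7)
The plan is a direct simulation argument that exploits the fact that a shortest path oracle is strictly at least as powerful as a distance oracle. Given any pair $(u,v)$, a shortest path query returns an actual shortest path $P$, from which the distance $d(u,v) = |P|$ can be recovered in linear time by counting the edges on $P$. Consequently, any single call to the distance oracle can be faithfully replaced by a single call to the shortest path oracle.

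With this observation in hand, I would run Algorithm~\ref{alg:verif-greedy} completely unchanged, merely redirecting each call to the distance oracle to the shortest path oracle and substituting the length of the returned path whenever the algorithm consults a distance. Since every intermediate quantity computed by the algorithm depends only on these distances (and not on the identity of the returned path), its execution trace and output are identical to those in the distance oracle model. In particular, correctness is preserved, and the number of shortest path queries issued is exactly equal to the number of distance queries Algorithm~\ref{alg:verif-greedy} would have issued under the assumptions of Theorem~\ref{thm:verify-dist}.

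Invoking Theorem~\ref{thm:verify-dist} then immediately yields the two promised bounds: $n^{1+O(\sqrt{(\log\log n+\log\Delta)/\log n})}$ shortest path queries in general, and $O(\Delta(\Delta+\tw\log n)n\log^2 n)$ queries when the graph has treewidth $\tw$. There is no real technical obstacle to overcome: the corollary is simply a restatement of Theorem~\ref{thm:verify-dist} through the trivial reduction from distance queries to shortest path queries. Any subtlety has already been absorbed into the proof of Theorem~\ref{thm:verify-dist} itself.
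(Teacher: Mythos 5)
Your argument is exactly the paper's: a shortest path query subsumes a distance query (count the edges on the returned path), so Algorithm~\ref{alg:verif-greedy} runs unchanged with the same number of queries, and Theorem~\ref{thm:verify-dist} gives the bounds. Correct, and identical in approach to the paper's one-line justification.
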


\subsubsection{Reconstruction.}

\begin{theorem}
\label{thm:recons-sp}
For  graph reconstruction using a shortest path oracle, there is a deterministic algorithm (Algorithm~\ref{alg:recons-greedy}) that achieves the same query complexity as in Theorem~\ref{thm:verify-dist}.
\end{theorem}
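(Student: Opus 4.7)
The plan is to derive Theorem~\ref{thm:recons-sp} by reducing reconstruction with a shortest path oracle to verification with a distance oracle, and then invoking Theorem~\ref{thm:verify-dist}.

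First, I would establish the key monotonicity between the two optima. Let $\OPT_v$ denote the minimum number of distance queries needed to verify $G$ in the case $\hat G = G$, and $\OPT_r$ the minimum number of shortest path queries needed to reconstruct $G$. Then $\OPT_r \leq \OPT_v$. Indeed, executing any verifying query set $Q$ on a shortest path oracle returns, for each queried pair, at least the distance (as the length of the returned path); these distance responses already certify every non-edge of $G$, and since the edges of $G$ are the complement of its non-edges on $V$, the set $Q$ reconstructs $G$. Hence the upper bound on $\OPT_v$ implicit in the proof of Theorem~\ref{thm:verify-dist} also bounds $\OPT_r$.

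Second, I would design Algorithm~\ref{alg:recons-greedy} in direct analogy with Algorithm~\ref{alg:verif-greedy}: at each step, issue the shortest path query that confirms the largest number of currently-unresolved non-edges. After each query, maintain two pieces of state: the set of \emph{discovered} edges (those appearing along some previously returned path) and the set of \emph{confirmed} non-edges (pairs proved to be at distance at least $2$ via triangle inequalities on distances observed so far). The algorithm terminates when every pair of vertices is either a discovered edge or a confirmed non-edge, at which point the edge set of $G$ is completely determined.

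Third, I would adapt the set-cover analysis of Section~\ref{sec:verif-set-cover} to show that Algorithm~\ref{alg:recons-greedy} uses at most $O(\log n)\cdot \OPT_r$ queries. Combined with the inequality $\OPT_r \leq \OPT_v$ and the bounds on $\OPT_v$ established in the proof of Theorem~\ref{thm:verify-dist}, this yields the query complexities claimed in Theorem~\ref{thm:recons-sp}. The hardest part will be formalizing the greedy selection when $G$ is unknown a priori, since the number of non-edges that a candidate query would confirm depends on responses not yet received. I would handle this by evaluating each candidate query against the current knowledge (the collected distances and discovered edges) using the same covering rule as in the verification analysis; the standard set cover argument still compares the adaptive greedy against the offline optimum $\OPT_r$, and the extra edges exposed by shortest path responses only reduce the remaining work, so they can never hurt the approximation guarantee.
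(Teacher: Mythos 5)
Your algorithm design matches the paper's Algorithm~\ref{alg:recons-greedy}: extend the greedy verification algorithm to reconstruction, using the shortest path oracle both to discover edges and to confirm non-edges, and terminate when every vertex pair is resolved one way or the other. However, the analysis in your third step has a genuine gap, and your first step is not what the paper actually uses.

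The problematic claim is that Algorithm~\ref{alg:recons-greedy} uses at most $O(\log n)\cdot \OPT_r$ queries by ``the standard set cover argument.'' That argument requires a \emph{fixed} set-cover instance, and here the instance is not fixed. The greedy selects $(u,v)$ maximizing $|S^X_{u,v}\setminus Y|$, where $S^X_{u,v}$ is defined with respect to the \emph{current} candidate graph $X$, which changes each time an edge is discovered. To prove greedy covers a $1/B$ fraction of the remaining pairs $R$ in one step, you need that $R$ can be covered by $B$ sets of the form $S^X_{u,v}$ \emph{for the current $X$}. The right choice of $B$ is thus the non-edge-verification cost of $X$ against itself (where Lemma~\ref{lem:cover} applies with $\hat G=X$), not $\OPT_r(G)$ and not $\OPT_v(G)$. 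These two quantities are defined relative to $G$ and do not bound the cover number of the instance the greedy actually faces. The paper's key insight, emphasized in the Remark at the end of Section~\ref{sec:recon}, is that Theorem~\ref{thm:verify-bound-OPT} gives a \emph{uniform} upper bound $f(n,\Delta)$ on the non-edge-verification cost of \emph{every} $n$-vertex graph of maximum degree at most $\Delta$, which therefore applies to every intermediate $X$. That yields $O(\log n)\cdot f(n,\Delta)$ non-edge-confirming queries. Your claim $\OPT_r\leq\OPT_v$ (if interpreted as a statement about certificate sizes rather than adaptive algorithm cost) is true but is simply not the lever the argument needs.

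You also leave implicit the bound on the other type of query. Each query with $\delta_G(u,v)\neq\delta_X(u,v)$ reveals an edge of $G$ that is added to $X$, so there are at most $|E|\leq\Delta n$ such queries; together with the $n-1$ initial spanning queries and the $O(\log n)\cdot f(n,\Delta)$ bound above, this gives the stated complexity.
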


The key is to formulate this problem as a problem of verification using a distance oracle, so that we get the same query complexity as in Theorem~\ref{thm:verify-dist}. 
We extend ideas of the greedy algorithm in Theorem~\ref{thm:verify-dist} to design Algorithm~\ref{alg:recons-greedy}, and we show that  each query to a shortest path oracle makes as much progress for reconstruction as the corresponding query to a distance oracle would have made for verifying a given graph.
The main realization here is that reconstruction can be viewed as the verification of a dynamically changing graph.
See Section~\ref{sec:recon}.

\begin{theorem}
\label{thm:recons-chordal}
For reconstruction of {\bf chordal graphs} using a distance oracle,  there is a randomized algorithm (Algorithm~\ref{alg:recons-chordal}) with query complexity $O\big(\Delta^3 2^\Delta\cdot\allowbreak n(2^\Delta+\log^2n)\log n\big)$, which is $\tilde O(n)$ when the maximum degree $\Delta$ is $O(\log \log n)$.
\end{theorem}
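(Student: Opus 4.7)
The plan is to exploit the clique-tree structure of chordal graphs via a randomized divide-and-conquer algorithm. Recall that a graph is chordal if and only if it admits a tree decomposition in which every bag is a maximal clique; since $G$ has maximum degree $\Delta$, every bag has size at most $\Delta+1$, and hence $\tw(G)\le\Delta$. I would aim to find a \emph{centroid bag} $C^*$ of this (unknown) clique tree, namely a maximal clique whose removal splits $G$ into subgraphs of size at most $n/2$, and then recurse on each piece.

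Step 1 (locating the centroid clique). Using only distance queries, find $C^*$ with high probability. Pick a random vertex $r$ and a random sample $S$ of $O(\log n)$ further vertices, and query all pairwise distances inside $\{r\}\cup S$. With constant probability a sampled pair has its shortest path crossing $C^*$, which lets us identify a single vertex $c\in C^*$ by locating a midpoint on such a path. The remaining members of $C^*$ lie in $N(c)$, a set of size at most $\Delta$, so we recover $C^*$ by enumerating all $2^{\Delta+1}$ subsets of a candidate neighborhood and testing each for the clique/separator property via distance queries. This enumeration contributes one factor of $2^{\Delta}$, while boosting the randomized success probability to $1-1/\mathrm{poly}(n)$ contributes a $\log^2 n$ factor that must be paid over all $O(n)$ recursive calls.

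Step 2 (partition and recurse). For every vertex $v\notin C^*$, query $d(v,c)$ for each $c\in C^*$ at a cost of $O(\Delta)$ queries per vertex, then use the fact that $v,w$ lie in the same component of $G\setminus C^*$ exactly when $d(v,w)<\min_{c\in C^*}\bigl(d(v,c)+d(c,w)\bigr)$ in order to group the vertices into components. For each resulting component $H$, the subset $C_H\subseteq C^*$ of separator vertices adjacent to $H$ is itself a clique by the clique-tree property, and we identify it by trying all $2^{|C^*|}\le 2^{\Delta+1}$ subsets against the already-measured distances, contributing the second $2^\Delta$ factor. We then recurse on each $H\cup C_H$, which is again chordal of maximum degree at most $\Delta$. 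Balanced partition gives recursion depth $O(\log n)$, and combining Steps~1 and 2 at each level yields the claimed $O\bigl(\Delta^3 2^\Delta\cdot n(2^\Delta+\log^2 n)\log n\bigr)$ bound, with the base case (a piece of size $O(\Delta)$) reconstructed by exhaustive pairwise queries in $O(\Delta^2)$ time.

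The principal obstacle will be Step 1: locating a balanced clique separator without knowing the clique tree. The technical lemma to prove is that with constant probability a random pair $(u,v)$ has a shortest $u$-$v$ path crossing the centroid bag, and moreover that any such path reveals a canonical vertex $c\in C^*$ via distance queries to $u$ and $v$; together with neighborhood enumeration this yields $C^*$ itself. A secondary subtlety, already familiar from Theorem~\ref{thm:verify-dist}, is that in the recursion the distance oracle still reports distances in the full graph $G$ rather than in the current subgraph; we sidestep this by always anchoring comparisons to vertices of $C^*$, whose distances to vertices of a component $H$ coincide with their distances inside $H\cup C_H$ by the separator property.
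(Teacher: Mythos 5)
Your Step~2 has a genuine gap that, as stated, would blow the query budget. You propose to group vertices of $U\setminus C^*$ into components via the criterion ``$v,w$ lie in the same component iff $d(v,w)<\min_{c\in C^*}(d(v,c)+d(c,w))$.'' But this test involves $d(v,w)$, and you have only queried distances from $C^*$ to $U$. Applying it to all pairs $v,w$ would require $\Omega(|U|^2)$ distance queries, precisely what the theorem must avoid. The paper's \textsc{Partition} (Algorithm~\ref{alg:partition}) takes a different route: it queries $(S,U)$ and $(N(S)\cap U,\,U)$ (about $\Delta^2|U|$ queries), defines a cluster $B(a)=\{x\in U\setminus S:\delta(a,x)\le\delta(S,x)\}$ for each $a\in(N(S)\cap U)\setminus S$, and then merges overlapping clusters. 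Correctness (Lemma~\ref{lemma:connected1}) is non-trivial; the proof couples the merging order to the order in which Prim's algorithm would build a spanning tree over the frontier vertices of each component, showing that every consecutively added pair of clusters shares a midpoint. That mechanism is what makes the partitioning cheap, and it is the main missing ingredient in your plan.

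Step~1 is also underspecified in ways that matter for the stated bound. First, the probability that a random pair's shortest path hits a fixed balanced clique separator is only $\Omega(1/\Delta)$ (Lemma~\ref{lemma:most popular node} gives $\max_v p_v\ge 1/(2(\Delta+1))$), so $O(\log n)$ samples and a ``constant probability'' claim are insufficient; the paper uses $\Theta(\Delta^2\log|U|)$ sampled pairs and a concentration bound (Lemma~\ref{lemma:sample accuracy}), selecting the vertex occurring most frequently across the sampled shortest paths rather than a midpoint of a single path. Second, identifying such a vertex requires actually extracting shortest paths, which costs $O(|U|\log|U|)$ queries per path via \textsc{Shortest-Path} (Section~\ref{sec:shortest path}); querying only pairwise distances within the $O(\log n)$ sample does not locate a path vertex. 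Finally, your ``centroid bag'' gives recursion depth $O(\log n)$, but the algorithm can only guarantee a $\beta$-balanced separator with $\beta$ close to $1$ (the paper takes $\beta=\max(1-1/(\Delta 2^{\Delta+1}),\sqrt{1-1/(4(\Delta+1))})$), which is why the final bound carries an extra $\Delta 2^\Delta$ factor from $\log_{1/\beta}n$; that factor is absent from your accounting. Your observation that distances to $C^*$ are unaffected by the recursion because $C^*$ is a clique separator is correct and indeed used by the paper, but it does not rescue the partitioning step.
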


The algorithm first finds a separator using random sampling and statistical estimates, as in~\cite{mathieu2013graph}.
Then it partitions the graph into subgraphs with respect to this separator and recurses on each subgraph.
However, the separator here is a clique instead of an edge in~\cite{mathieu2013graph} for outerplanar graphs.
Thus the main difficulty is to design and analyze a more general tool for partitioning the graph, see Section~\ref{sec:partition}.
The reconstruction algorithm is in Section~\ref{sec:recons-chordal-algo}.

On the other hand, graph reconstruction using a distance oracle has a lower bound that is slightly higher than trivial $\Omega(n)$ bound, as in the following theorem. Its proof is in Section~\ref{sec:information-lower-bound}.

\begin{theorem}
\label{thm:information-lower-bound}
For graph reconstruction using a distance oracle, assuming the maximum degree $\Delta\geq 3$ is such that $\Delta=o\left(n^{1/2}\right)$, any algorithm has query complexity $\Omega(\Delta n\log n/\log\log n)$.
\end{theorem}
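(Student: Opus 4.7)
The plan is an information-theoretic counting argument. I would construct a family $\mathcal{G}$ of $n$-vertex connected labeled graphs, each with maximum degree $\Delta$ and diameter $D = O(\log n)$, such that $\log_2 |\mathcal{G}| = \Omega(\Delta n \log n)$. Given such a family, since every distance query returns an integer in $\{0, 1, \ldots, D\}$, each answer has at most $D+1 = O(\log n)$ possible values, so there are at most $(D+1)^q$ distinct query-answer transcripts of length $q$. For a deterministic algorithm, distinct input graphs in $\mathcal{G}$ must yield distinct transcripts (otherwise the algorithm's output would be identical on two different graphs), forcing
\[
(D+1)^q \;\geq\; |\mathcal{G}|, \qquad\text{i.e.,}\qquad q \;\geq\; \frac{\log_2|\mathcal{G}|}{\log_2(D+1)} \;=\; \Omega\!\left(\frac{\Delta n \log n}{\log\log n}\right).
\]
For randomized algorithms, I would invoke Yao's minimax principle with the uniform distribution on $\mathcal{G}$, combined with the Shannon source-coding bound on the expected transcript length over an alphabet of size $D+1$.

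To construct the family, I would use random $\Delta$-regular graphs. Via the configuration model, the number of labeled $\Delta$-regular graphs on $n$ vertices is $2^{\Theta(\Delta n \log(n/\Delta))}$, and the hypothesis $\Delta = o(n^{1/2})$ yields $\log(n/\Delta) = \Omega(\log n)$, so this count is $2^{\Omega(\Delta n \log n)}$. Classical results (Bollob\'as, Wormald, and others) establish that for $\Delta \geq 3$ a constant fraction of these graphs have diameter $O(\log_{\Delta-1} n) = O(\log n)$. Letting $\mathcal{G}$ be that subfamily preserves the lower bound on $\log_2|\mathcal{G}|$ while guaranteeing the diameter bound.

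The main obstacle is this construction step: producing many graphs with bounded degree \emph{and} logarithmic diameter in a single family. Both hypotheses on $\Delta$ in the theorem enter exactly here: $\Delta\geq 3$ is essential because a connected $2$-regular graph is a cycle of diameter $\Theta(n)$, which would ruin the entropy-per-query accounting; and $\Delta = o(n^{1/2})$ is what ensures $\log(n/\Delta) = \Omega(\log n)$, i.e., the extra $\log n$ factor in the graph count. The remaining information-theoretic and Yao-style steps are routine once $\mathcal{G}$ is in hand.
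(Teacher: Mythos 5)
Your information-theoretic core (a family $\mathcal{G}$ of bounded-degree, logarithmic-diameter graphs with $\log_2|\mathcal{G}| = \Omega(\Delta n\log n)$, each answer carrying $O(\log\log n)$ bits, hence $q = \Omega(\Delta n\log n/\log\log n)$) is exactly the paper's, and you correctly identify that the only nontrivial step is producing such a family. The difference is in how you build $\mathcal{G}$. You draw on random $\Delta$-regular graphs: the configuration-model count gives the entropy, and Bollob\'as--Fern\'andez de la Vega / Wormald give that a constant fraction of $\Delta$-regular graphs ($\Delta\geq 3$) are connected with diameter $O(\log_{\Delta-1}n)$. The paper instead builds the small diameter in by hand: it fixes a complete binary tree on roughly $2n/3$ vertices (giving $O(\log n)$ eccentricity for free, no probabilistic structure theorems needed), attaches each of the remaining $\approx n/3$ vertices to a distinct leaf of the tree by a pendant edge, and lets those remaining vertices induce an \emph{arbitrary} graph of maximum degree $\Delta-1$. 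The entropy then comes from McKay's asymptotic count of graphs of bounded maximum degree, which requires $\Delta = o(\sqrt{n})$ just as in your argument. Your route is a bit slicker to state but leans on heavier black boxes (diameter of random regular graphs); the paper's is more elementary and makes the diameter bound a triviality. Both are correct, and both uses of the hypotheses $\Delta\geq 3$ and $\Delta=o(\sqrt{n})$ are sound, though the reason $\Delta\geq 3$ is needed differs: in your construction it is to avoid the 2-regular cycle, while in the paper's it is so the binary-tree backbone (degree $3$) plus the pendant attachments fit inside the degree bound. Your extra remark on Yao's principle for the randomized case is a reasonable addition that the paper leaves implicit.
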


It is an outstanding open question whether there is a reconstruction algorithm using a near-linear number of queries to a distance oracle for degree bounded graphs in general.

\section{Notation and Preliminaries}
\label{sec:preli-decompo}
Let $\delta$ be the distance metric of $G$.
For a subset of vertices $S\subseteq V$ and a vertex $v\in V$, define $\delta (S,v)$ to be $\min _ { s \in S} \delta (s,v)$.
For $v\in V$,  let $N(v)=\{u\in V: \delta(u,v)\leq 1\}$ and let $N_2(v)=\{u\in V: \delta(u,v)\leq 2\}$.
For $S\subseteq V$,  let $N(S)=\bigcup_{s\in S} N(s)$.
We define $\hat \delta$, $\hat N$, and $\hat N_2$ similarly with respect to the graph $\hat G$.

A pair of vertices $\{u,v\}\subseteq V$ is called a \emph{non-edge} of the graph $G=(V,E)$ if $\{u,v\}\notin E$.

For a subset of vertices $S\subseteq V$, let $G[S]$ be the subgraph induced by $S$.
For a subset of edges $H\subseteq E$, we identify $H$ with the subgraph induced by the edges of $H$.
Let $\delta_H$ denote the distance metric of the subgraph $H$.

For a vertex $s\in V$ and a subset $T\subseteq V$, define \textsc{Query}$(s,T)$ as \textsc{Query}$(s,t)$ for every $t\in T$.
For subsets $S,T\subseteq V$, define \textsc{Query}$(S,T)$ as \textsc{Query}$(s,t)$ for every $(s,t)\in S\times T$.

\begin{definition}
A subset $S\subseteq V$ is a \emph{$\beta$-balanced separator} of the graph $G=(V,E)$ (for $\beta<1$) if the size of every connected component of $G\setminus S$ is at most $\beta |V|$.
\end{definition}

\begin{definition}
A {\em tree decomposition} of a graph $G = (V,E)$ is a tree $T$ with nodes $n_1, n_2, \ldots , n_\ell$. Node
$n_i$ is identified with a \emph{bag} $S_i \subseteq V$, satisfying the following conditions:
\begin{enumerate}
\item For every vertex $v$ in $G$, the nodes whose bags contain $v$ form a connected subtree of $T$.
\item For every edge $(u,v)$ in $G$, some bag contains both $u$ and $v$.
\end{enumerate}
The {\em width} of the decomposition is the size of the largest bag minus 1, and the {\em treewidth} of $G$ is the minimum width over all possible tree decompositions of $G$.
\end{definition}

\begin{lemma}[\cite{reed2003algorithmic}]
\label{lemma:exist-balanced}
Let $G$ be a graph of treewidth $k$.
Any tree decomposition of width $k$ contains a bag $S$ that is a $(1/2)$-balanced separator of $G$.
\end{lemma}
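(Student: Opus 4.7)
The plan is to use a centroid-style argument on the tree $T$. Fix an arbitrary node $n_i$ with neighbors $n_{j_1},\ldots,n_{j_d}$ in $T$, and let $T_{j_k}$ denote the subtree of $T\setminus\{n_i\}$ containing $n_{j_k}$. For each $k$, let $U_{j_k}$ be the set of vertices of $G$ that appear in some bag of $T_{j_k}$ but not in $S_i$. First I would establish the structural claim that $U_{j_1},\ldots,U_{j_d}$ partition $V\setminus S_i$ and that no edge of $G$ crosses between two distinct $U_{j_k}$'s. The partition part uses the first axiom of tree decompositions: for each $v\notin S_i$, the nodes of $T$ whose bags contain $v$ form a connected subtree avoiding $n_i$, hence lying entirely in exactly one $T_{j_k}$. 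The non-crossing part uses the second axiom: if an edge $(u,v)$ had its endpoints in different $U_{j_k}$'s, a common bag containing $u$ and $v$ would have to sit in two disjoint subtrees simultaneously. Consequently every connected component of $G\setminus S_i$ is contained in some single $U_{j_k}$, and it suffices to exhibit $n_i$ for which $|U_{j_k}|\leq n/2$ for every $k$.

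To locate such a bag, I would run a greedy walk on $T$: start at an arbitrary $n_i$ and, while some $|U_{j_k}|>n/2$, move to $n_{j_k}$. To show termination I would prove that the walk never immediately backtracks, i.e., once we cross from $n_i$ to $n_{j_k}$, the reverse side has size strictly less than $n/2$. Fix the edge $(n_i,n_{j_k})$, and let $A,B$ be the sets of vertices of $G$ appearing in some bag on the $n_i$-side and on the $n_{j_k}$-side respectively after cutting this edge. The key observation is that $A\cap B\subseteq S_i\cap S_{j_k}$: any vertex appearing on both sides has its connecting subtree spanning the cut edge, so by the first axiom it lies in both $S_i$ and $S_{j_k}$. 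Since $V=A\cup B$, inclusion–exclusion gives $|A|+|B|=n+|A\cap B|$; combined with the step hypothesis $|U_{j_k}|=|B\setminus S_i|>n/2$ (hence $|B|>n/2$), we get $|A|-|A\cap B|<n/2$. Using $A\cap B\subseteq S_{j_k}$, the reverse quantity satisfies $|A\setminus S_{j_k}|\leq|A|-|A\cap B|<n/2$. Thus the newly ``bad'' direction after the move cannot be the edge just traversed, so the walk traces a simple path in the finite tree $T$ and must halt at a node whose bag is a $(1/2)$-balanced separator.

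The main subtlety will be the bookkeeping around the two sides of a cut edge and the correct exclusion of endpoint bags from each tally; the inclusion $A\cap B\subseteq S_i\cap S_{j_k}$ is the technical heart of the argument. Note that neither the width of the decomposition nor the treewidth of $G$ enters anywhere in the proof, so the statement in fact holds for arbitrary tree decompositions.
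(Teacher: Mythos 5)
The paper cites this lemma to \cite{reed2003algorithmic} and does not reproduce a proof, so there is no internal argument to compare against. Your proof is correct and self-contained, and it is essentially the standard centroid-style argument for this result. The structural claim --- that the sets $U_{j_k}$ partition $V \setminus S_i$ and that no edge of $G$ crosses between two distinct $U_{j_k}$'s --- follows exactly as you say from the two tree-decomposition axioms, and the termination argument for the greedy walk is sound: the containment $A \cap B \subseteq S_i \cap S_{j_k}$ together with $|A| + |B| = n + |A \cap B|$ (using $V = A \cup B$) yields $|A \setminus S_{j_k}| \le |A| - |A\cap B| = n - |B| < n/2$ whenever $|B \setminus S_i| > n/2$, so the walk never immediately backtracks and therefore traces a simple path in the finite tree. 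Two minor remarks: at any node at most one direction can be bad (two bad directions would give disjoint sets of total size exceeding $n$), so the walk is in fact deterministic; and the step $V = A \cup B$ silently uses the standard convention that every vertex of $G$ appears in at least one bag, which is part of the usual definition of a tree decomposition even though the paper's statement lists only two conditions. Your closing observation is also correct: nothing in the argument uses the width $k$, so the conclusion holds for arbitrary tree decompositions, not only those of optimal width --- the paper phrases it for width-$k$ decompositions only because that is the case it needs.
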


A graph is \emph{chordal} if every cycle of length greater than three has a chord: namely, an edge connecting two nonconsecutive vertices on the cycle.
An introduction to chordal graphs can be found in {\em e.g.,}~\cite{blair1993introduction}.

\begin{lemma}[\cite{blair1993introduction}]
\label{lemma:chordal}
Let $G$ be a chordal graph.
Then $G$ has a tree decomposition where every bag is a maximal clique\footnote{A maximal clique is a clique which is not contained in any other clique.} and every maximal clique appears exactly once in this decomposition.
\end{lemma}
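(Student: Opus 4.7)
The plan is to build a clique tree from a perfect elimination ordering (PEO) of $G$.

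First, I would recall the classical fact that a graph is chordal if and only if it admits a PEO: an ordering $v_1, \ldots, v_n$ of $V$ such that for every $i$, the set $K_i := \{v_i\} \cup \{v_j : j > i,\; \{v_i, v_j\} \in E\}$ is a clique. The forward direction uses Dirac's theorem that every chordal graph on at least two vertices has a simplicial vertex, combined with the fact that vertex deletion preserves chordality; one then iterates to produce the ordering.

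Second, I would observe that every maximal clique of $G$ is one of the $K_i$'s. Indeed, if $C$ is a maximal clique and $i = \min\{j : v_j \in C\}$, then the PEO property forces $C \subseteq K_i$, and maximality forces $C = K_i$. This shows that there are at most $n$ maximal cliques and gives a natural map from maximal cliques to indices in which each maximal clique is ``represented'' exactly once, which will ensure the ``each maximal clique appears exactly once'' half of the lemma.

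Third, I would construct the tree $T$ on the set of maximal cliques by the Gavril--Buneman construction: form the \emph{clique intersection graph} whose vertices are the maximal cliques of $G$ and whose edge weights are $|C \cap C'|$, and take $T$ to be any maximum-weight spanning tree. The central claim is that such a $T$ satisfies the \emph{induced subtree property}, i.e., for every $v \in V$ the set of maximal cliques containing $v$ forms a connected subtree of $T$. Granting this, condition (1) of a tree decomposition is immediate, and condition (2) holds because every edge of $G$ extends to some maximal clique, which is by construction one of the bags.

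The main obstacle is the induced subtree property, which is where chordality is genuinely used. I would prove it by a standard exchange argument on $T$: if some vertex $v$ violates the property, then there are two $v$-containing cliques whose path in $T$ passes through a clique missing $v$; swapping an edge on this path for the direct edge in the clique intersection graph between the two $v$-containing cliques strictly increases the total weight, contradicting maximality of $T$. The ``clique'' structure of minimal separators in chordal graphs is what makes this exchange strictly improve the weight.
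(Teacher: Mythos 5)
The paper does not prove this lemma; it is invoked by citation to Blair and Peyton's survey, whose own argument builds a clique tree inductively via a perfect elimination ordering. Your plan via the Gavril--Buneman maximum-weight spanning tree of the clique intersection graph is a genuinely different (and well-known) route, so it is worth evaluating on its own terms. Steps one and two of your plan are fine: chordality gives a PEO via iterated removal of simplicial vertices, and every maximal clique is one of the sets $K_i$, each counted once.

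The gap is in the exchange argument. You remove some edge $(A,B)$ on the $T$-path from $K_1$ to $K_2$ (two cliques containing $v$) and add the direct edge $(K_1,K_2)$, asserting this "strictly increases the total weight." The weight change is $|K_1\cap K_2| - |A\cap B|$, and there is no a priori reason this is positive: it is entirely possible that $K_1 \cap K_2 = \{v\}$ while $|A\cap B|$ is large, in which case the swap does not improve the tree. The clause "the clique structure of minimal separators is what makes this exchange strictly improve the weight" gestures at Dirac's theorem that minimal separators in chordal graphs are cliques, but does not supply the needed inequality, and I do not see a short way to get it from that fact alone. In fact, the standard way to show that maximum-weight spanning trees of the clique graph are exactly the clique trees is the counting identity $w(T)=\sum_v e_T(v)\le \sum_v(|\{K: v\in K\}|-1)$ with equality iff $T$ has the induced-subtree property — but this only characterizes clique trees among spanning trees once one already knows a clique tree exists, so it cannot by itself establish existence. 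To repair your proof you would either need a more delicate choice of the violating configuration and the swapped edge together with a real lemma justifying the weight gain, or you should abandon the spanning-tree route for existence and instead push the PEO machinery you already set up one step further: process $v_n,\dots,v_1$ in reverse elimination order, maintain a clique tree of $G[\{v_i,\dots,v_n\}]$, and at step $i$ either enlarge the bag $K_i\setminus\{v_i\}$ to $K_i$ (if the former was itself a maximal clique of the already-built graph) or attach a new node with bag $K_i$ to any existing bag containing $K_i\setminus\{v_i\}$. That inductive construction is what Blair and Peyton actually do, and it directly yields both halves of the lemma.
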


From Lemmas~\ref{lemma:exist-balanced} and~\ref{lemma:chordal}, we have:

\begin{corollary}
\label{cor:exist-clique-balanced}
Let $G$ be a chordal graph of maximum degree $\Delta$.
Then $G$ has treewidth at most $\Delta$, and there exists a clique $S\subseteq V$ of size at most $\Delta+1$ that is a $(1/2)$-balanced separator of $G$.
\end{corollary}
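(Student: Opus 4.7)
The plan is to combine the two lemmas directly, with a small degree-counting observation to control the size of the bags.

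First, I would apply Lemma~\ref{lemma:chordal} to obtain a tree decomposition of $G$ in which every bag $S_i$ is a maximal clique of $G$. Then I would bound the size of each bag by a degree argument: in any clique of size $k$, each vertex has at least $k-1$ neighbors in $G$, so $k-1 \leq \Delta$ and hence $k \leq \Delta+1$. This gives $|S_i|\leq \Delta+1$ for every bag, so the width of this decomposition is at most $\Delta$, which in particular shows $\tw(G)\leq \Delta$.

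Next, I would apply Lemma~\ref{lemma:exist-balanced} to this same tree decomposition (whose width is at most $\Delta$): it guarantees some bag $S$ that is a $(1/2)$-balanced separator of $G$. Since $S$ is a bag of the decomposition from Lemma~\ref{lemma:chordal}, it is a clique, and by the degree bound above it has size at most $\Delta+1$. This $S$ is the desired clique separator, completing the proof.

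There is no real obstacle: the statement is essentially a composition of the two cited lemmas, with the only extra content being the elementary observation that in a bounded-degree graph every clique has size at most $\Delta+1$. The mild subtlety is that Lemma~\ref{lemma:exist-balanced} must be applied to the clique tree decomposition produced by Lemma~\ref{lemma:chordal} rather than to an arbitrary optimal-width decomposition, so that the separating bag is automatically a clique; this is fine because the clique tree decomposition already has width at most $\Delta = \tw(G)$, so Lemma~\ref{lemma:exist-balanced} applies to it.
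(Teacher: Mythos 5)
Your approach is the paper's approach (the paper offers no proof beyond citing the two lemmas): take the clique tree decomposition from Lemma~\ref{lemma:chordal}, bound each bag by $\Delta+1$ via the degree argument, and invoke Lemma~\ref{lemma:exist-balanced} on that same decomposition so that the separating bag is automatically a clique. One justification needs fixing, though: you write that the clique tree decomposition ``has width at most $\Delta = \tw(G)$,'' but $\Delta = \tw(G)$ is false in general (a path is chordal with $\Delta = 2$ and treewidth $1$). Since Lemma~\ref{lemma:exist-balanced} as stated applies only to a decomposition of width exactly $\tw(G)$, what you actually need is the standard fact that the clique tree decomposition of a chordal graph has minimum width, i.e.\ $\tw(G) = \omega(G) - 1$ for chordal $G$; then $\tw(G) = \omega(G) - 1 \leq \Delta$ gives both the treewidth bound and the hypothesis of Lemma~\ref{lemma:exist-balanced}. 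With that correction, your argument is complete and coincides with the paper's intended derivation.
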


\section{Proof of Theorem~\ref{thm:verify-dist}}

\subsection{Greedy Algorithm}
\label{sec:verif-set-cover}

The task of verification comprises verifying that every edge in $\hat G$ is an edge in $G$, and verifying that every non-edge of $\hat G$ is a non-edge of $G$.
The second part, called \emph{non-edge verification}, is the main task for graphs of bounded degree.\footnote{In non-edge verification, we always assume that $\hat E\subseteq E$.}

\begin{theorem}
\label{thm:verify-set-cover}
For graph verification using a distance oracle, there is a deterministic greedy algorithm (Algorithm~\ref{alg:verif-greedy}) that uses at most $\Delta n+(\ln n+1)\cdot \OPT$ queries, where $\OPT$ is the optimal number of queries for non-edge verification.
\end{theorem}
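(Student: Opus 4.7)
The plan is to decompose verification into edge verification (checking $\hat E \subseteq E$) and non-edge verification (checking $E \subseteq \hat E$), and then reduce the latter to a \textsc{Set-Cover} instance that Algorithm~\ref{alg:verif-greedy} solves by the classical greedy heuristic.

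First, I would query every pair in $\hat E$ directly: the answer is $1$ iff the pair is an edge of $G$, so this phase uses at most $|\hat E| \leq \Delta n / 2 \leq \Delta n$ queries and, on success, establishes $\hat E \subseteq E$. This monotonicity is essential for what follows because it implies $\delta \leq \hat\delta$ pointwise: any edge in $E \setminus \hat E$ can only shorten distances, never lengthen them.

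Next, for non-edge verification under $\hat E \subseteq E$, I would build a \textsc{Set-Cover} instance. The universe $U$ is the set of non-edges of $\hat G$. To each candidate query $q = (x,y)$ I associate
\[
S_q = \bigl\{\{u,v\}\notin\hat E \,:\, \min\bigl(\hat\delta(x,u)+1+\hat\delta(v,y),\ \hat\delta(x,v)+1+\hat\delta(u,y)\bigr) < \hat\delta(x,y)\bigr\},
\]
the set of non-edges whose individual insertion into $\hat E$ would strictly shorten the $x$-to-$y$ distance. The central claim is that a query set $Q$ correctly performs non-edge verification iff $\bigcup_{q\in Q} S_q = U$. Necessity is immediate: if $\{u,v\}$ is uncovered, the alternative graph $\hat G + \{u,v\}$ gives identical answers to every $q\in Q$, so $Q$ cannot distinguish it from $\hat G$. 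Sufficiency uses the monotonicity: if any $\{u,v\}\in E\setminus\hat E$ exists and $q=(x,y)$ covers it, then $\delta(x,y)\leq \hat\delta(x,u)+1+\hat\delta(v,y) < \hat\delta(x,y)$, so $q$ detects the discrepancy regardless of whatever other extra edges may be present. This establishes $\OPT$ equals the \textsc{Set-Cover} optimum. Algorithm~\ref{alg:verif-greedy}'s rule---at each step, pick the query confirming the largest number of currently unconfirmed non-edges---is then exactly the classical greedy for \textsc{Set-Cover}, and the standard analysis yields at most $(\ln n + 1)\cdot\OPT$ queries for this phase; combining with the $\Delta n$ edge-verification cost gives the theorem.

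The step I expect to be the main obstacle is making the correspondence above watertight, in particular the sufficiency direction. A naive worry is that two missing edges might jointly shorten some queried distance in a way that neither would individually, making the per-edge cover condition too weak; the reason this does not happen is precisely the monotonicity $\delta\leq \hat\delta$ from the edge-verification phase, which guarantees that \emph{each} single missing edge individually forces a detectable drop in the queried distance it is covered by. A secondary, more cosmetic issue is pinning the greedy approximation factor to exactly $\ln n + 1$ rather than $\ln|U| + 1 = 2\ln n + O(1)$, which should follow from the $H_{\OPT}$-form of the greedy guarantee together with the trivial observation that $\OPT \leq n$ on a connected bounded-degree $\hat G$.
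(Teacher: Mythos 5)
Your proposal follows the paper's proof almost exactly: the same decomposition into edge verification (at most $\Delta n$ queries) and non-edge verification, the same \textsc{Set-Cover} reduction with universe $\NE$ and sets $S_{u,v}$, and essentially the same two lemmas (your ``sufficiency'' argument is the contrapositive of the paper's Lemma~\ref{lem:Suv}, and your ``necessity'' argument is the paper's Lemma~\ref{lem:cover}). Your symmetrized $\min$ in the definition of $S_q$ is equivalent to the paper's $S_{u,v}$ once one reads $\{a,b\}$ as an unordered pair. So this is the same route.

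One small point worth flagging: your concern about the approximation constant is legitimate, but your proposed fix does not go through. The greedy set-cover guarantee is $H_d \le \ln d + 1$ where $d$ is the largest set size, and since $|S_{u,v}|$ can be $\Theta(n^2)$ (e.g., when $\hat G$ is a long path and $(u,v)$ are its endpoints) the naive bound is $2\ln n + O(1)$, not $\ln n + 1$. Your claim that $\OPT \le n$ is unsubstantiated --- the paper itself only proves $\OPT = n^{1+o(1)}$ and shows an $\Omega(n\log n/\log\log n)$ lower bound for the closely related reconstruction problem --- and in any case a small $\OPT$ makes $\ln(|U|/\OPT)$ larger, not smaller, so it would not tighten the greedy ratio. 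The paper glosses over this constant as well, and since it is only used downstream as ``$O(\log n)\cdot\OPT$,'' the discrepancy is immaterial; but you should not expect the literal constant $\ln n + 1$ to be recoverable by the route you sketch.
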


Now we prove Theorem~\ref{thm:verify-set-cover}.
Let $\NE$ be the set of the non-edges of $\hat G$.
For each pair of vertices $(u,v)\in V^2$, we define $S_{u,v}\subseteq \NE$ as follows:
\begin{equation}
\label{eqn:Suv}
S_{u,v}=\left\{ \{a,b\}\in \NE: \hat{\delta}(u,a) + \hat{\delta}(b,v) + 1 < \hat{\delta} (u,v) \right\}.
\end{equation}
The following two lemmas relate the sets $S_{u,v}$ with non-edge verification.

\begin{lemma}
\label{lem:Suv}
Assume that $\hat E\subseteq E$.
Let $(u,v)\in V^2$ be such that $\delta(u,v)=\hat \delta(u,v)$.
Then every pair $\{a,b\}\in S_{u,v}$ is a non-edge of $G$.
\end{lemma}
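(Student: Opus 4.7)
The plan is to argue by contradiction: suppose some $\{a,b\}\in S_{u,v}$ were actually an edge of $G$, and derive a contradiction with the hypothesis $\delta(u,v)=\hat\delta(u,v)$.

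First I would record the monotonicity of the two metrics. Since $\hat E\subseteq E$, any path in $\hat G$ is also a path in $G$, hence $\delta(x,y)\le\hat\delta(x,y)$ for every pair of vertices $x,y$. In particular $\delta(u,a)\le\hat\delta(u,a)$ and $\delta(b,v)\le\hat\delta(b,v)$.

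Next, assume $\{a,b\}\in E$ and concatenate a shortest $u$–$a$ path in $G$, the edge $(a,b)$, and a shortest $b$–$v$ path in $G$. By the triangle inequality in $G$,
\[
\delta(u,v)\;\le\;\delta(u,a)+1+\delta(b,v)\;\le\;\hat\delta(u,a)+1+\hat\delta(b,v).
\]
By the definition of $S_{u,v}$ in~\eqref{eqn:Suv}, the right-hand side is strictly less than $\hat\delta(u,v)$, which equals $\delta(u,v)$ by hypothesis. This gives $\delta(u,v)<\delta(u,v)$, a contradiction. Therefore $\{a,b\}\notin E$, i.e., $\{a,b\}$ is a non-edge of $G$.

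There is no real obstacle here; the statement is essentially a one-line triangle-inequality argument once one notices that $\hat E\subseteq E$ forces $\delta\le\hat\delta$ pointwise, so the strict slack in the definition of $S_{u,v}$ rules out the existence of the edge $\{a,b\}$ in $G$. The lemma's role in the paper is to certify that a distance query returning the expected value $\hat\delta(u,v)$ witnesses the non-edge status of an entire batch of pairs in $S_{u,v}$, which is exactly what makes the forthcoming greedy \textsc{Set-Cover} reduction work.
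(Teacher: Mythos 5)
Your proof is correct and is essentially the paper's own argument, merely recast as a proof by contradiction: the paper chains the same inequalities (triangle inequality in $G$, monotonicity $\delta\le\hat\delta$ from $\hat E\subseteq E$, the strict slack in the definition of $S_{u,v}$, and the hypothesis $\delta(u,v)=\hat\delta(u,v)$) to conclude directly that $\delta(a,b)>1$. The two routes are logically identical.
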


\begin{proof}
Let $\{a,b\}$ be any pair in $S_{u,v}$.
By the triangle inequality, ${\delta}(u,a) + \delta(a,b) + {\delta}(b,v)   \geq \delta(u,v)=\hat \delta(u,v)$.
By the definition of $S_{u,v}$ and using $\hat E\subseteq E$, we have $\hat{\delta}(u,v)> \hat{\delta}(u,a) +  \hat{\delta}(b,v)+1\geq {\delta}(u,a) + {\delta}(b,v) +1$.
Thus $\delta(a,b)>1$, i.e., $\{a,b\}$ is a non-edge of $G$.
\end{proof}

\begin{lemma}
\label{lem:cover}
If a set of queries $T$ verifies that every non-edge of $\hat G$ is a non-edge of $G$, then $\bigcup_{(u,v)\in T} S_{u,v}=\NE$.
\end{lemma}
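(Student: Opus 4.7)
The plan is to proceed by contrapositive via a standard indistinguishability argument. Suppose some $\{a,b\} \in \NE$ lies outside $S_{u,v}$ for every $(u,v) \in T$. I would exhibit a graph $G' \neq \hat G$ with $\hat E \subseteq E(G')$ whose distances on every query in $T$ coincide with those of $\hat G$. Under the standing assumption $\hat E \subseteq E$, this shows that if the hidden graph were $G'$, then the oracle answers on $T$ would be exactly the same as in the case $G = \hat G$, so the queries cannot certify that $\{a,b\}$ is a non-edge of the hidden graph. This contradicts the hypothesis that $T$ verifies every non-edge of $\hat G$.

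The natural choice is $G' = (V,\; \hat E \cup \{\{a,b\}\})$, obtained by adding the single edge $\{a,b\}$ to $\hat G$; then $G' \neq \hat G$ and $\hat E \subseteq E(G')$ are immediate. The entire content of the argument is to check that $\delta_{G'}(u,v) = \hat\delta(u,v)$ for each $(u,v) \in T$. The bound $\delta_{G'}(u,v) \leq \hat\delta(u,v)$ is free since $G'$ has strictly more edges. For the reverse inequality, I would look at a shortest $u$-$v$ path in $G'$: either it avoids the edge $\{a,b\}$, in which case it is already a path in $\hat G$ of length at least $\hat\delta(u,v)$, or it traverses $\{a,b\}$ exactly once (multiple traversals can be shortcut), in which case its length is $\hat\delta(u,a)+1+\hat\delta(b,v)$ or $\hat\delta(u,b)+1+\hat\delta(a,v)$ depending on orientation. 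The hypothesis $\{a,b\} \notin S_{u,v}$, read symmetrically over both orientations of the unordered pair, forces both of these expressions to be at least $\hat\delta(u,v)$.

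The only point that requires care---and the main obstacle I would flag---is that the defining inequality~(\ref{eqn:Suv}) is syntactically asymmetric in $a$ and $b$, whereas elements of $\NE$ are unordered pairs. Non-membership in $S_{u,v}$ must therefore be read as the failure of the strict inequality for \emph{both} orderings $(a,b)$ and $(b,a)$; otherwise a shortcut through $\{a,b\}$ in the ``wrong'' direction could spoil the equality $\delta_{G'}(u,v) = \hat\delta(u,v)$. Once this reading is fixed, the remainder is a direct application of the triangle inequality in $G'$, and the lemma follows.
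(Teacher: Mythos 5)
Your proof is correct and follows essentially the same route as the paper: assume $\{a,b\}\in\NE$ is missed by every $S_{u,v}$ with $(u,v)\in T$, add $\{a,b\}$ as an edge to $\hat G$, and observe that no query in $T$ can detect the change. Your observation that non-membership in $S_{u,v}$ must be read as failure of the strict inequality under \emph{both} orderings of the unordered pair $\{a,b\}$ is a legitimate clarification of the paper's implicit convention in Equation~\eqref{eqn:Suv}; the paper's terse ``this will not create a shorter path'' silently relies on exactly this symmetric reading.
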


\begin{proof}
Assume, for a contradiction, that some $\{a,b\}\in \NE$ does not belong to any $S_{u,v}$ for $(u,v)\in T$.
Consider adding $\{a,b\}$ to the set of edges of $\hat E$: this will not create a shorter path between $u$ and $v$, for any $(u,v)\in T$.
Thus including $\{a,b\}$ in $\hat{E}$ is consistent with the answers of all queries in $T$. 
This contradicts the assumption that $T$ verifies that $\{a,b\}$ is a non-edge of $G$.
\end{proof}

From Lemmas~\ref{lem:Suv} and~\ref{lem:cover}, the non-edge verification is equivalent to the \textsc{Set-Cover} problem with the universe $\NE$ and the sets $\{S_{u,v}:(u,v)\in V^2\}$.
The \textsc{Set-Cover} instance can be solved using the well-known
greedy algorithm~\cite{johnson1974approximation}, which gives a $(\ln n +1)$-approximation.
Hence our greedy algorithm for verification (Algorithm~\ref{alg:verif-greedy}).
For the query complexity, first, verifying that $\hat E\subseteq E$ takes at most $\Delta n$ queries, since the graph has maximum degree $\Delta$.
The part of non-edge verification uses a number of queries that is at most $(\ln n+1)$ times the optimal number of queries.
This proves Theorem~\ref{thm:verify-set-cover}.

\begin{algorithm}
\caption{Greedy Verification}
\label{alg:verif-greedy}
\begin{algorithmic}[1]
\Procedure{Verify}{$\hat G$}
    \For{$\{u,v\}\in \hat E$}
        \textsc{Query}$(u,v)$
            \EndFor
    \If{some $\{u,v\}\in\hat E$ has $\delta(u,v)\neq \hat\delta (u,v)$} \textbf{return} \emph{no} \EndIf
    \State $Y\gets\emptyset$  
    \While{$\hat E \cup Y$ does not cover all vertex pairs}
        \State choose  $(u,v)$ that maximizes $|S_{u,v}\setminus Y |$\Comment{$S_{u,v}$ defined in Equation~\eqref{eqn:Suv}}
        \State Query($u,v$) 
        \If{$\delta(u,v)=\hat\delta (u,v)$}
            \State $Y\gets Y\cup S_{u,v}$       
        \Else
            \State \textbf{return} \emph{no}
        \EndIf
    \EndWhile
    \State \textbf{return} \emph{yes}
\EndProcedure
\end{algorithmic}
\end{algorithm}

\subsection{Bounding $\OPT$ to Prove Theorem~\ref{thm:verify-dist}}
\label{sec:verif}
From Theorems~\ref{thm:verify-set-cover}, in order to prove Theorem~\ref{thm:verify-dist}, we only need to bound $\OPT$, as in the following two theorems.

\begin{theorem}
\label{thm:verify-bound-OPT}
For graph verification using a distance oracle, the optimal number of queries $\OPT$ for non-edge verification is 
$n^{1+O\left(\sqrt{(\log \log n+\log \Delta)/\log n}\right)}.$
\end{theorem}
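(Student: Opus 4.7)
The plan is to exhibit an explicit non-edge verification scheme (a set $T$ of distance queries) meeting the claimed bound; by Theorem~\ref{thm:verify-set-cover}, this immediately upper-bounds $\OPT$. Rather than analyzing Algorithm~\ref{alg:verif-greedy} directly, I would construct $T$ recursively via a Voronoi partitioning approach, refining the single-level scheme of~\cite{mathieu2013graph}.

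One level of the construction would go as follows. Pick a set $C \subseteq V$ of $k$ centers (with $k$ a branching parameter to be optimized) whose $\hat G$-Voronoi cells are balanced, each of size at most $O(n/k)$. Place every pair in $C \times V$ into $T$, at cost $kn$ queries. By Lemma~\ref{lem:Suv}, these queries certify as non-edges of $G$ every pair $\{a,b\}$ for which some shortest $\hat G$-path from $a$ to $b$ passes through a center. The non-edges that remain to certify are then those with both endpoints in essentially the same Voronoi cell, and I would recurse on each cell.

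The main obstacle, and the chief technical difficulty, is that the oracle returns distances in $G$ rather than in the cell, so a naive recursion would either produce subcells that fail to cover all intra-cell non-edges or allow subcells to leak outside the parent. The fix I would implement is to pick next-level centers from a slightly enlarged neighborhood of a cell $W$ (so that the induced Voronoi partition of $W$ is correctly determined from $\hat G$), while insisting that every subcell passed to the next recursive level stays contained in $W$. The technical heart of the analysis would be an analogue of the authors' Lemma~\ref{lem:Da-same}: for vertices inside $W$, the Voronoi cell as seen through the true oracle on $G$ agrees with the one computed from $\hat G$ restricted to this enlargement. With this in hand, Lemma~\ref{lem:Suv} continues to certify the intended non-edges at every recursion level.

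The resulting recurrence has the shape $T(n) \leq k\,T(n/k) + O(\Delta k n)$, where the $kn$ term counts queries from centers to all vertices and the $\Delta$ factor accounts for the boundary expansion needed to cover edges of $\hat G$ incident to a cell. Unrolling over depth $d = \log n / \log k$ gives $|T| = O(\Delta k n \cdot \log n / \log k)$; balancing by setting $\log k \asymp \sqrt{\log n \cdot (\log\log n + \log \Delta)}$ then yields the announced bound $n^{1+O(\sqrt{(\log\log n + \log\Delta)/\log n})}$. The delicate point in making this clean recurrence valid is the boundary accounting: each cell must be expanded just enough to cover all edges of $\hat G$ incident to it, yet not so much that the sum of subcell sizes at a given recursion level grows super-linearly in $n$; a $(1+o(1))$-factor blow-up per level is the target, which together with the balanced Voronoi cells keeps the depth $O(\log n / \log k)$ and the total work linear in $n$ times the per-level overhead.
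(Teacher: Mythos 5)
Your high-level plan---recursive Voronoi partitioning, centers allowed outside each cell while the subcells stay confined within it, a distance-agreement lemma in the spirit of Lemma~\ref{lem:Da-same}, and a branching/depth tradeoff---is exactly the paper's approach. The gap is in the recurrence bookkeeping. You write $T(n) \le k\,T(n/k) + O(\Delta k n)$ and say the target is a $(1+o(1))$-factor blowup per level; neither is achievable, and neither is what the paper proves. Each extended subcell $\hat D_a$ is formed by unioning the Voronoi cells of all vertices in $\hat N_2(a)$, so it has size up to $\Theta(\Delta^2)\cdot |U|/s$ rather than $|U|/s$, and the center-selection lemma (Lemma~\ref{lem:subset-center}, à la Thorup--Zwick) produces about $s\log n$ centers to guarantee the $|U|/s$ cell bound. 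The sum of subcell sizes at the next level is therefore $\Theta(\Delta^2\log n)\cdot |U|$, i.e.\ the per-level mass blows up by a factor $\Theta(\Delta^2\log n)$, not $(1+o(1))$. This is not a detail you can wave away: with your stated recurrence the optimum would be $k = O(1)$, yielding $\tilde O(n)$---which would be a strictly stronger result than Theorem~\ref{thm:verify-bound-OPT} and is not achieved here. The correct recurrence looks like $T(n) \le (s\log n)\cdot T\bigl(\Theta(\Delta^2) n/s\bigr) + O(\Delta^2 s\log n\cdot n)$. It is precisely the $\Theta(\Delta^2\log n)$ blowup that forces the recursion depth to be capped at $k_0 = O\bigl(\sqrt{\log n/(\log\log n + \log\Delta)}\bigr)$ (so that $(\Delta^2\log n)^{k_0} = n^{O(1/k_0)}$), with the branching parameter then chosen as $s = n^{1/k_0}$ to reach base cases within that depth. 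Your choice $\log k \asymp \sqrt{\log n(\log\log n+\log\Delta)}$ is the right answer, but it is dictated by the need to bound depth against the blowup, not by minimizing a blowup-free recurrence; as written, your own recurrence would have pushed you toward constant $k$, and you would not have been able to justify the exponent.
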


\begin{theorem}
\label{thm:verify-bound-OPT-bounded-width}
For graph verification using a distance oracle, if the graph has treewidth $\tw$, then the optimal number of queries $\OPT$ for non-edge verification is $O(\Delta(\Delta+\tw\log n)n\log n)$.
\end{theorem}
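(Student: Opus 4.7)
The plan is to design a recursive non-edge verification procedure. I would first fix a tree decomposition of $\hat{G}$ of width $\tw$ and invoke Lemma~\ref{lemma:exist-balanced} to locate a bag $S$ of size at most $\tw+1$ that is a $(1/2)$-balanced separator of $\hat{G}$. At this top level I would put every pair in $S \times V$ into the query set, costing $(\tw+1)\cdot n$ queries. Under the standing assumption $\hat{E}\subseteq E$, this batch directly certifies every non-edge of $\hat{G}$ having an endpoint in $S$, and it also certifies every non-edge $\{a,b\}$ whose endpoints lie in distinct components of $\hat{G}\setminus S$: any $G$-path from $a$ to $b$ must cross $S$, so $\delta(a,b)=\min_{s\in S}\bigl(\delta(a,s)+\delta(s,b)\bigr)$, a value already determined by the queries just made.

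The remaining non-edges lie inside a single component. For each component $C$ of $\hat{G}\setminus S$ I would build a weighted auxiliary graph $\hat{H}_C$ on vertex set $C\cup S$, consisting of the unit edges of $\hat{G}[C\cup S]$ together with, for every pair $s,s'\in S$, a weighted edge of weight $\hat{\delta}(s,s')$. A path-splicing argument then shows that $\delta_{\hat{H}_C}$ coincides with $\hat{\delta}$ on $V(\hat{H}_C)$: one cuts a $\hat{G}$-shortest path at its successive visits to $S$ and replaces each piece lying outside $C$ by the matching weighted $S$-edge, and conversely expands each weighted edge along an $\hat{H}_C$-path into an actual shortest path in $\hat{G}$. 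Moreover, the original tree decomposition restricted to $V(\hat{H}_C)$ still has width at most $\tw$, because the bag $S$ already covers every weighted edge we added. I would then recurse on $\hat{H}_C$ to verify non-edges of $\hat{G}$ between pairs of $C$: a recursive query between two $C$-vertices is answered by a single fresh query in $G$, while a recursive query involving an $S$-vertex is answered for free from the top level.

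The main obstacle I anticipate is keeping the recursion faithful. Recursing naively on $\hat{G}[C]$ fails because a distance oracle call returns $\delta_G$, not $\delta_{G[C]}$, and two $C$-vertices that are non-adjacent in $G$ might appear close through a shortcut via $S$, causing false verifications. Internalizing every such shortcut by adding the weighted clique on $S$ is exactly what makes the recursion honest, and proving distance-preservation together with the treewidth bound for $\hat{H}_C$ is the main technical step. A secondary bookkeeping subtlety is that the same $S$-vertices reappear in every child subproblem, so at each level I must argue that the only newly charged queries are those involving separator vertices absent from every ancestor's separator, thereby preventing double counting.

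For the final accounting, each child subproblem has size at most $n/2+\tw+1$, so the recursion has depth $O(\log n)$; at any given level the number of new queries is bounded by $(\tw+1)$ times the total size of that level's subproblems, which is $O(n)$ by the charging just described. Thus non-edge verification is achievable in $O(\tw\cdot n\log n)$ queries, comfortably within the claimed bound $O(\Delta(\Delta+\tw\log n)n\log n)$.
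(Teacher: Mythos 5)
Your overall architecture matches the paper's: pick a $(1/2)$-balanced bag $S$ of a width-$\tw$ tree decomposition of $\hat G$, handle the ``across $S$'' non-edges at the top, then recurse on each component $C$ after augmenting $\hat G[C\cup S]$ with a weighted clique on $S$ to make oracle answers interpretable inside the subproblem. The distance-preservation claim for $\hat H_C$ and the observation that the treewidth stays at most $\tw$ are both correct and are the paper's key devices too.

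The gap is in what you query at each level. Querying only $S\times V$ does \emph{not} certify the non-edges between distinct components of $\hat G\setminus S$, and your justification --- ``any $G$-path from $a$ to $b$ must cross $S$'' --- is circular: $S$ separates $\hat G$, but under the hypothesis $\hat E\subseteq E$ the unknown $G$ may contain an extra edge joining two components of $\hat G\setminus S$ directly, which is precisely the kind of event verification must rule out. Concretely, take $\hat G$ to be the path $v_1\!-\!v_2\!-\!v_3\!-\!v_4\!-\!v_5$ with $S=\{v_3\}$, and let $G$ be $\hat G$ plus the chord $\{v_1,v_4\}$. Then $\delta(v_3,v_i)=\hat\delta(v_3,v_i)$ for every $i$, so every query in $S\times V$ returns the expected value; the recursive calls on $\{v_1,v_2,v_3\}$ and $\{v_3,v_4,v_5\}$ are likewise consistent; yet $\{v_1,v_4\}$ is an edge of $G$. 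Equivalently, in the \textsc{Set-Cover} formalism of Lemmas~\ref{lem:Suv} and~\ref{lem:cover}, the non-edge $\{v_1,v_4\}$ belongs to no $S_{u,v}$ with $u\in S$ or with $u,v$ both inside a single $C\cup S$, so your proposed query family does not cover $\NE$. The paper's Lemma~\ref{lemma:check_neighbors} is exactly what repairs this: one must query $\bigl(N(S)\cap U\bigr)\times U$, not $S\times U$, and the proof there argues via the $N(S)$-vertices $a,b$ closest to the two endpoints. In the example above, the query $(v_4,v_1)\in N(S)\times V$ detects the chord since the distance drops from $3$ to $1$. Restoring the $N(S)$ queries also explains the bound: $|N(S)|$ is controlled by $|S|\le\tw+1$ times the per-vertex degree, and because the augmented weighted edges inflate degrees to $O(\Delta+\tw\log n)$ over $O(\log n)$ recursion levels, one lands on $O(\Delta(\Delta+\tw\log n)n\log n)$ rather than your $O(\tw\cdot n\log n)$; the latter is an artifact of the undercounted query set, not an actual improvement.

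A smaller remark: your per-level charging heuristic (``only newly charged queries are those involving separator vertices absent from every ancestor's separator'') is not how the paper accounts for cost, and it is not needed once the query set is $N(S)\times U$; the recurrence $q(|U|)=O(|N(S)|\cdot|U|)+\sum_C q(|C|+|S|)$ with $(1/2)$-balance and a constant base case already gives the $O(\log n)$ depth factor directly.
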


Theorem~\ref{thm:verify-dist} follows trivially from Theorems~\ref{thm:verify-set-cover},~\ref{thm:verify-bound-OPT}, and~\ref{thm:verify-bound-OPT-bounded-width}, by noting that both $\Delta$ and $\log n$ are smaller than $n^{\sqrt{(\log \log n+\log \Delta)/\log n}}$.
The proof of Theorem~\ref{thm:verify-bound-OPT} is in Section~\ref{sec:proof-verify-bound-OPT}, and the proof of Theorem~\ref{thm:verify-bound-OPT-bounded-width} is in Section~\ref{sec:verify-bounded-width}.

\subsection{Proof of Theorem~\ref{thm:verify-bound-OPT}}
\label{sec:proof-verify-bound-OPT}
To show Theorem~\ref{thm:verify-bound-OPT}, we provide a recursive algorithm for non-edge verification with the query complexity in the theorem statement.
As in~\cite{mathieu2013graph}, the algorithm selects a set of \emph{centers} partitioning $V$ into Voronoi cells and expands them slightly so as to cover all edges of $G$.
But unlike~\cite{mathieu2013graph}, instead of using exhaustive search inside each cell, the algorithm verifies each cell recursively.
The recursion is a challenge because the distance oracle returns the distance in the entire graph, not in the cell.
Straightforward attempts to use recursion lead either to subcells that do not cover every edge of the cell, or to excessively large subcells.
To make the recursion work, we allow selection of centers \emph{outside} the cell, while still limiting the subcells to being contained \emph{inside} the cell (Figure~\ref{fig:two-levels}).
This simple but subtle setup is one novelty of the algorithmic design.

Let $U\subseteq V$ represents the set of vertices for which we are currently verifying the induced subgraph.
The goal is to verify that every non-edge of $\hat G[U]$ is a non-edge of $G[U]$.
This is equivalent to verifying that every edge of $G[U]$ is an edge of $\hat G[U]$.

The algorithm uses a subroutine to find \emph{centers} $A\subseteq V$ such that the vertices of $U$ are roughly equipartitioned into the Voronoi cells centered at vertices in $A$.
For a set of centers $A\subseteq V$ and a vertex $w\in V$, let $\hat C_A(w)=\{v\in V:\hat \delta(w,v)<\hat \delta(A,v)\}$, which represents the Voronoi cell of $w$ if $w$ is added to the set of centers.
We note that $\hat C_A(w)=\emptyset$ for $w\in A$, since in that case, $\hat \delta(w,v)\geq\hat \delta(A,v)$ for every $v\in V$.
The subscript $A$ is omitted when clear from the context.

\begin{lemma}
\label{lem:subset-center}
Given a graph $\hat G=(V,\hat E)$, a subset of vertices $U\subseteq V$, and an integer $s\in[1,n]$, Algorithm~\ref{alg:subset-center} computes a subset of vertices $A\subseteq V$, such that:
\begin{itemize}
\item the expected size of the set $A$ is at most $2s\log n$; and
\item for every vertex $w\in V$, we have $|\hat C_A(w)\cap U|\leq 4|U|/s$.
\end{itemize}
\end{lemma}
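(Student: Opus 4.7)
I propose the following design for Algorithm~\ref{alg:subset-center}: initialize $A \gets \emptyset$; then repeat the main loop: while there exists some $w \in V$ with $|\hat C_A(w)\cap U| > 4|U|/s$, augment $A$ by a batch of $s$ vertices drawn from a suitably biased distribution over $V$, and iterate; return $A$ when no such witness remains. The second bullet of the lemma then holds deterministically by the termination condition, so the task reduces to bounding the expected number of rounds by $2\log n$.

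To bound the number of rounds, my plan is a potential-function argument. A natural candidate is
\[
\Phi(A) \;=\; \bigl|\{u \in U : u \in \hat C_A(w)\text{ for some bad witness }w\}\bigr|,
\]
the number of vertices of $U$ that still lie in some oversized cell. Since $\Phi(\emptyset) \leq |U| \leq n$ and the algorithm halts exactly when $\Phi = 0$, showing that each round decreases $\Phi$ by a constant factor in expectation yields $O(\log n)$ expected rounds via a standard stopping-time argument. Multiplying by the batch size $s$ then gives the expected size bound $|A| \leq 2s\log n$, proving the first bullet.

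The main obstacle is the constant-factor drop in $\Phi$ per round. A naïve uniform sample from $U$ is insufficient on pathological instances: consider a star graph in which $w^* = c$ has $|\hat C_A(c)\cap U| = |U|-|A|$ whenever $c \notin A$, while a uniform sample hits $c$ with probability only $1/|U|$ and otherwise removes just one leaf from the cell. To circumvent this, the batch distribution must be biased toward vertices that actually collapse large cells. A natural choice is, in each round, to select a current bad witness $w^*$, include $w^*$ itself in the batch (instantly collapsing $\hat C_A(w^*)$ to $\emptyset$), and then draw $s{-}1$ further samples uniformly at random from $\hat C_A(w^*)\cap U$; by a Chernoff-style argument using the triangle inequality for $\hat\delta$, these further samples shrink each of the remaining large cells by a constant fraction in expectation. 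Verifying the metric-geometric estimate underlying this Chernoff step—namely a uniform lower bound on $|L(u,w)| = |\{v : \hat\delta(v,u) \leq \hat\delta(w,u)\}|$ for typical $u$ inside an oversized cell—is the technical core of the proof and the step I expect to be most delicate; the rest of the analysis (summing the two sources of shrinkage across a round and concluding with the stopping-time argument) then follows by routine calculation.
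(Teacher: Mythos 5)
Your sampling rule diverges from the paper's, and the divergence breaks the argument. In the paper's Algorithm~\ref{alg:subset-center}, each round forms $W=\{w\in V: |\hat C_A(w)\cap U|>4|U|/s\}$, the set of \emph{all} current bad witnesses, and adds each element of $W$ to $A$ independently with probability $\min(s/|W|,1)$; it does not single out one witness $w^*$ and sample from its cell. The paper then gives no proof of the lemma: Algorithm~\ref{alg:subset-center} is presented as a direct adaptation of the \textsc{Center} procedure of Thorup and Zwick~\cite{thorup2001compact}, and Lemma~\ref{lem:subset-center} is stated as a routine extension of their Theorem~3.1. Your outer scaffolding --- a potential, a constant-factor drop per round, hence $O(\log n)$ rounds times a batch of size $s$ --- does match that analysis; the problem is the choice of sampling distribution.

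The step that fails is the claim that $s-1$ samples from the single cell $\hat C_A(w^*)\cap U$ shrink \emph{every} remaining oversized cell by a constant factor. When the oversized cells are nearly disjoint, samples confined to one cell do nothing for the others. For a concrete obstruction, let $\hat G$ consist of $\Theta(s)$ hub vertices joined by a path, each hub carrying $\Theta(|U|/s)$ pendant leaves, and suppose $A$ currently holds one leaf per star but no hub. Then each hub $w$ is a bad witness with $\hat C_A(w)\cap U$ essentially equal to its own star, and these sets are pairwise disjoint. Moreover, adding further leaves of a star to $A$ cannot evict sibling leaves from $\hat C_A(w)$, since a leaf remains strictly closer to its own hub than to any other leaf; so a batch drawn from $\hat C_A(w^*)\cap U$ collapses only the single cell $\hat C_A(w^*)$. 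The potential then falls by $O(|U|/s)$, not by a constant fraction, forcing $\Theta(s)$ rounds and $\Theta(s^2)$ centers rather than $O(s\log n)$. Sampling across all of $W$, as in the paper and in Thorup--Zwick, is what gives every bad witness a fair chance of being hit in each round and yields the needed geometric decay.
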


\begin{algorithm}
\caption{Finding Centers for a Subset}
\label{alg:subset-center}
\begin{algorithmic}[1]
\Function{Subset-Centers}{$\hat G, U, s$}
    \State $A\gets\emptyset$
    \While{there exists $w\in V$ such that $|\hat C(w)\cap U|>4|U|/s}$
        \State $W\gets \{w\in V: |\hat C(w)\cap U|>4|U|/s\}$
        \State Add  each element of $W$ to $A$ with probability $\min\left(s/|W|,1\right)$
    \EndWhile
    \State \textbf{return} $A$
\EndFunction
\end{algorithmic}
\end{algorithm}

Algorithm~\ref{alg:subset-center} is a generalization of the algorithm \textsc{Center} in~\cite{thorup2001compact}; and Lemma~\ref{lem:subset-center} is a trivial extension of Theorem~3.1 in~\cite{thorup2001compact}.\footnote{As noted in~\cite{thorup2001compact}, it is possible to derandomize the center-selecting algorithm, and its running time is still polynomial.}

Using a set of centers $A$, we define, for each $a\in A$, its \emph{extended Voronoi cell} $\hat D_a\subseteq U$ as follows:
\begin{equation}
\label{eqn:def-Da}
\hat D_a=\left(\bigcup\left\{\hat C(b): b \in \hat N_2(a)\right\}\cup \hat N_2(a)\right) \cap U.
\end{equation}
We define $C(w)$ and $D_a$ similarly as $\hat C(w)$ and $\hat D_a$, but with respect to the graph $G$.

The following lemma is the base of the recursion.
Its proof is similar to that of Lemma~3 in~\cite{mathieu2013graph}.
\begin{lemma}
\label{lem:edge-cover}
$\bigcup_{a\in A} G[D_a]$ covers every edge of $G[U]$.
\end{lemma}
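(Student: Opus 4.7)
The plan is a short case analysis on $d := \delta(A,u)$ for an arbitrary edge $\{u,v\}$ of $G[U]$. Without loss of generality I would assume $\delta(A,u) \le \delta(A,v)$ (by swapping the roles of the two endpoints), and fix a center $a\in A$ attaining $\delta(a,u)=d$. The whole task reduces to exhibiting, for this edge, a single element of $N_2(a)$ that certifies membership of both $u$ and $v$ in $D_a$.

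The easy case is $d\le 1$: the triangle inequality gives $\delta(a,v)\le d+1\le 2$, so both endpoints lie in $N_2(a)\cap U$, which is part of $D_a$ by the definition in~\eqref{eqn:def-Da}.

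The substantive case is $d\ge 2$. Here I would walk along any shortest path $a=x_0,x_1,\dots,x_d=u$ in $G$, set $b:=x_2$, and argue that $u,v\in C(b)\cap U$. Membership $b\in N_2(a)$ is immediate from $\delta(a,b)=2$. For $u$, the witness inequality is $\delta(b,u)=d-2<d=\delta(A,u)$. For $v$, the triangle inequality through $u$ yields $\delta(b,v)\le \delta(b,u)+\delta(u,v)=d-1$, while the WLOG ordering forces $\delta(A,v)\ge d$, so the strict inequality $\delta(b,v)<\delta(A,v)$ holds. Thus both $u$ and $v$ sit in $C(b)$, and since the edge lies in $G[U]$ they also sit in $C(b)\cap U\subseteq D_a$. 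A brief auxiliary observation, obtained from the subpath-optimality argument along the chosen shortest path, is that $\delta(A,x_i)=i$ for every $0\le i\le d$; in particular $\delta(A,b)=2>0$, so $b\notin A$ and $C(b)$ is not vacuously empty.

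The only delicate point is the WLOG ordering of the endpoints. Without it, the adversarial subcase would be $\delta(A,v)=d-1$: then $\delta(b,v)\le d-1=\delta(A,v)$ becomes tight and $v$ would escape $C(b)$. Ordering the endpoints eliminates that subcase. Modulo this choice, the argument is the one used in the proof of Lemma~3 of~\cite{mathieu2013graph}, transposed to the extended-Voronoi definition based on $N_2$ instead of $N$.
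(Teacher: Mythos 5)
Your argument is correct and is essentially the same as the paper's: the same WLOG ordering $\delta(A,u)\le\delta(A,v)$, the same choice of $a$ and of $b$ at distance $2$ from $a$ on a shortest $a$-to-$u$ path, and the same two triangle-inequality estimates placing both endpoints in $C(b)\cap U\subseteq D_a$. The extra remark that $\delta(A,x_i)=i$ (hence $b\notin A$) is a harmless addition, already implied by the inequality $\delta(b,u)<\delta(A,u)$.
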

\begin{proof}
We prove that for every edge $\{u,v\}$ of $G[U]$, there is some $a\in A$, such that both $u$ and $v$ are in $D_a$.
Let $\{u,v\}$ be any edge of $G[U]$. Without loss of generality, we assume $\delta(A,u)\leq \delta(A,v)$.
We choose $a\in A$ such that $\delta(a,u)=\delta(A,u)$.
If $\delta(a,u)\leq 1$, then both $u$ and $v$ are in $N_2(a)\cap U\subseteq D_a$.
If $\delta(a,u)\geq 2$, let $b$ be the vertex at distance 2 from $a$ on a shortest $a$-to-$u$ path in $G$.
By the triangle inequality, we have $\delta(b,v)\leq \delta(b,u)+\delta(u,v)=\delta(b,u)+1$.
Since $\delta(b,u)=\delta(a,u)-2$ and $\delta(a,u)=\delta(A,u)\leq \delta(A,v)$, we have $\delta(b,u)< \delta(A,u)$ and $\delta(b,v)<\delta(A,v)$. So both $u$ and $v$ are in $C(b)\cap U$, which is a subset of $D_a$ since $b\in N_2(a)$.
\end{proof}


From Lemma~\ref{lem:edge-cover}, verifying that every edge of $G[U]$ is an edge of $\hat G[U]$ reduces to verifying that every edge of $G[D_a]$ is an edge of $\hat G[D_a]$ for every $D_a$. 
To see this, consider any edge $\{u,v\}$ of $G[U]$.
There exists $a\in A$ such that $u,v\in D_a$.
It is enough to verify that $\{u,v\}$ is an edge of $\hat G[D_a]$, hence an edge of $\hat G[U]$.
This observation enables us to apply recursion on each $D_a$.

The main difficulty is: {\bfseries\boldmath How to obtain $D_a$ efficiently?}
If we compute $D_a$ from its definition, we first need to compute $N_2(a)$, which takes too many queries since $N_2(a)$ may contain nodes outside $U$.
Instead, a careful analysis shows that we can check whether $D_a=\hat D_a$ without even knowing $N_2(a)$, and $\hat D_a$ can be inferred from the graph $\hat G$ with no queries.
This is shown in Lemma~\ref{lem:Da-same}, which is the main novelty of the algorithmic design.

\begin{lemma}
\label{lem:Da-same}
Assume that $\hat E\subseteq E$.
If $\delta(u,v)=\hat\delta(u,v)$ for every pair $(u,v)$ from $\bigcup_{a\in A} \hat N_2(a)\times U$, then $D_a=\hat D_a$ for all $a \in A$.
\end{lemma}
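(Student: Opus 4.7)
The plan is to establish $D_a = \hat D_a$ by proving $\hat D_a \subseteq D_a$ and $D_a \subseteq \hat D_a$; both sets lie in $U$ by definition, so throughout the argument I may restrict attention to vertices $v \in U$. First I would record two consequences of the hypotheses. Since $\hat E \subseteq E$, we have $\delta(x,y) \le \hat\delta(x,y)$ for all $x,y \in V$. Since $a' \in \hat N_2(a')$ trivially, the hypothesis applied to each $a' \in A$ gives $\hat\delta(a', v) = \delta(a', v)$ for every $v \in U$; taking the minimum over $A$ yields $\hat\delta(A, v) = \delta(A, v)$ for every $v \in U$.

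The easy direction, $\hat D_a \subseteq D_a$, amounts to transferring cell membership from $\hat G$ to $G$. For $v \in \hat N_2(a) \cap U$, the inequality $\delta(a,v) \le \hat\delta(a,v) \le 2$ places $v$ in $N_2(a) \cap U \subseteq D_a$. For $v \in \hat C(b) \cap U$ with $b \in \hat N_2(a)$, first note that $\delta(a,b) \le \hat\delta(a,b) \le 2$ forces $b \in N_2(a)$. The hypothesis applied to the pair $(b, v)$ gives $\hat\delta(b, v) = \delta(b, v)$, and combined with $\hat\delta(A, v) = \delta(A, v)$ this yields $\delta(b, v) < \delta(A, v)$, i.e., $v \in C(b)$, so $v \in D_a$.

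The reverse inclusion is the main obstacle, because a $G$-witness $b \in N_2(a)$ certifying $v \in C(b)$ may satisfy $\hat\delta(a,b) > 2$, in which case $b \notin \hat N_2(a)$ and cannot serve as a witness in $\hat G$. My approach is to replace $b$ by a new witness $c$ extracted from a $\hat G$-shortest $a$-to-$v$ path. For $v \in N_2(a) \cap U$, the hypothesis gives $\hat\delta(a,v) = \delta(a,v) \le 2$, so $v \in \hat N_2(a) \cap U \subseteq \hat D_a$. Otherwise $v \in C(b) \cap U$ for some $b \in N_2(a)$ with $\delta(a,v) \ge 3$. The triangle inequality gives $\delta(a,v) \le \delta(a,b) + \delta(b,v) < 2 + \delta(A,v)$, so by integrality $\delta(A, v) \ge \delta(a, v) - 1$. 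Using $\hat\delta(a, v) = \delta(a, v) \ge 3$, pick $c$ at $\hat G$-distance $2$ from $a$ on a $\hat G$-shortest $a$-to-$v$ path: then $c \in \hat N_2(a)$ and $\hat\delta(c, v) = \delta(a, v) - 2 \le \delta(A,v) - 1 < \delta(A, v) = \hat\delta(A, v)$, so $v \in \hat C(c)$ and therefore $v \in \hat D_a$.

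The essence of the argument is that the mere existence of a $G$-witness $b$ forces $\delta(a, v) - \delta(A, v) \le 1$, which is precisely what allows an arbitrary depth-two prefix of a $\hat G$-geodesic from $a$ to $v$ to play the role of a valid $\hat G$-witness.
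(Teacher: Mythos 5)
Your proof is correct and takes essentially the same approach as the paper's: the easy direction exploits $\hat N_2(a)\subseteq N_2(a)$ together with the transfer of cell membership, and the hard direction replaces the $G$-witness $b$ by the vertex at distance two from $a$ on a $\hat G$-shortest $a$-to-$v$ path, using the triangle inequality through $b$ to bound $\delta(a,v)-\delta(A,v)$. The only cosmetic difference is that you verify $v\in\hat C(c)$ directly rather than first establishing the identity $\hat C(b)\cap U=C(b)\cap U$ and rewriting $\hat D_a$ as the paper does; your integrality remark is correct but unnecessary, since $\delta(a,v)<2+\delta(A,v)$ already gives $\delta(a,v)-2<\delta(A,v)$.
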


\begin{proof}
The proof is delicate but elementary.
For every $b\in \bigcup_{a\in A}\hat N_2(a)$, we have $\hat C(b)\cap U=C(b)\cap U$, because we have verified that $\hat \delta(b,u)=\delta(b,u)$ and $\hat \delta(A,u)=\delta(A,u)$ for every $u\in U$.
Therefore, $\hat D_a$ can be rewritten as $\left(\bigcup\left\{ C(b): b \in \hat N_2(a)\right\}\cup \hat N_2(a)\right) \cap U.$
Since $\hat E\subseteq E$, we have $\hat N_2(a)\subseteq N_2(a)$.
Therefore $\hat D_a\subseteq D_a$.

On the other hand, we have $N_2(a)\cap U\subseteq\hat N_2(a)\cap U$, because we have verified that $\hat \delta(a,u)=\delta(a,u)$ for all $u$ in $N_2(a)\cap U$.
To prove $D_a\subseteq \hat D_a$, it only remains to show that, for any vertex $u\notin N_2(a)$ such that $u\in C(b)\cap U$ for some $b\in N_2(a)$, we have $u\in C(x)\cap U$ for some $x\in \hat N_2(a)$.
We choose $x$ to be the vertex at distance 2 from $a$ on a shortest $a$-to-$u$ path in $\hat G$.
By the assumption and the definition of $x$, we have: $\delta(x,u)=\hat \delta(x,u)=\hat \delta(a,u)-2=\delta(a,u)-2$.
By the triangle inequality, and using $b\in N_2(a)$ and $u\in C(b)$, we have: $\delta(a,u)\leq \delta(a,b)+\delta(b,u)\leq 2+\delta(b,u)< 2+\delta(A,u)$.
Therefore $\delta(x,u)< \delta(A,u)$.
Thus $u\in C(x)\cap U$.
\end{proof}



The recursive algorithm for non-edge verification is in Algorithm~\ref{alg:verify-compact-routing}.
It queries every $(u,v)\in \bigcup_{a\in A} \hat N_2(a)\times U$ and then recurses on each extended Voronoi cell $\hat D_a$.
See Figure~\ref{fig:two-levels}.
It returns \emph{yes} if and only if every query during the execution gives the right distance.
The parameters $n_0$ and $s$ are defined later.
We assume that every edge of $\hat G$ has already been confirmed, i.e., $\hat E\subseteq E$.
Correctness of the algorithm follows trivially from Lemmas~\ref{lem:edge-cover} and~\ref{lem:Da-same}.

\begin{algorithm}
\caption{Recursive Verification}
\label{alg:verify-compact-routing}
\begin{algorithmic}[1]
\Procedure{Verify-Subgraph}{$\hat G,U$}
    \If{$|U|>n_0$}  
        \State $A\gets \textsc{Subset-Centers}(\hat G, U, s)$ \Comment{Algorithm~\ref{alg:subset-center}}\label{line:A}
        \For{$a\in A$}
            \State \textsc{Query}$(\hat N_2(a),U)$\label{line:check}
            \State \textsc{Verify-Subgraph}$(\hat G, \hat D_a)$\Comment{$\hat D_a$ defined in Equation~\eqref{eqn:def-Da}}\label{line:rec}
        \EndFor
    \Else \;\textsc{Query}$(U,U)$
    \EndIf
\EndProcedure
\end{algorithmic}
\end{algorithm}

\newcommand{\sz}{}
\begin{figure}[t]
\begin{minipage}[c]{0.48\textwidth}
\centering
\begin{tikzpicture}[scale=0.32]
\tkzInit[ymin=-12,ymax=9,xmin=-10,xmax=10]
\tkzClip
\tkzDefPoint(0,0){O}
\tkzLabelPoint[right=0.1](O){\sz $a$}
\node at (1.5,1.5){\sz $\hat D_a$};
\tkzDefPoint(5,5){A}
\tkzDefPoint(-5,0){B}
\tkzDefPoint(-2,-4.8){C}
\tkzDefPoint(3,-6){D}
\tkzDefPointBy[reflection=over A--B](O) \tkzGetPoint{E}
\tkzDefPointBy[reflection=over B--C](O) \tkzGetPoint{F}
\tkzDefPointBy[reflection=over C--D](O) \tkzGetPoint{G}
\tkzDefPointBy[reflection=over D--A](O) \tkzGetPoint{H}
\tkzDrawPoints[size=10](O,E,F,G,H)
\tkzDefLine[mediator](E,F) \tkzGetSecondPoint{I}
\tkzDefLine[mediator](F,G) \tkzGetSecondPoint{J}
\tkzDefLine[mediator](G,H) \tkzGetSecondPoint{K}
\tkzDefLine[mediator](H,E) \tkzGetSecondPoint{L}
\foreach \i/\j in {
    A/B,B/C,C/D,D/A,A/L,B/I,C/J,D/K}{\tkzDrawSegment[dotted](\i,\j)
}
\foreach \i/\j in {A/B,B/C,C/D,D/A}{
    \tkzDefPointWith[orthogonal normed,K=-1](\i,\j) \tkzGetPoint{\i\j}
    \tkzDefPointWith[orthogonal normed](\j,\i) \tkzGetPoint{\j\i}
}
\tikzstyle{mystyle}=[color=black,line width=2pt]
\foreach \i/\j in {A/B,B/C,C/D,D/A}{
    \tkzDrawSegment[mystyle](\i\j,\j\i)
}
\tkzDrawArc[mystyle](B,BA)(BC)
\tkzDrawArc[mystyle](C,CB)(CD)
\tkzDrawArc[mystyle](D,DC)(DA)
\tkzDrawArc[mystyle](A,AD)(AB)
\tkzDefPoint(5,-5){M}
\tkzLabelPoint[right=0.1](M){\sz $a'$}
\tkzDefPoint(3,2.5){P}
\tkzDefPoint(-3.8,-3){Q}
\tkzDefPoint(-5.5,2){R}
\tkzDefPoint(-3,-6.5){N}
\tkzDrawPoints[size=2,fill=white,color=red](M,N,P,Q,R)
\foreach \i/\j in {M/N,M/Q,M/P,P/R,R/Q,P/Q,Q/N}{
    \tkzDefLine[mediator](\i,\j) \tkzGetPoints{\i\j1}{\i\j2}
}
\tkzInterLL(MN1,MN2)(MQ1,MQ2) \tkzGetPoint{S}
\tkzInterLL(MP1,MP2)(MQ1,MQ2) \tkzGetPoint{T}
\tkzInterLL(PQ1,PQ2)(RQ1,RQ2) \tkzGetPoint{U}
\tkzInterLL(T,MP2)(AD,DA) \tkzGetPoint{X1}
\tkzInterLL(S,MN1)(CD,DC) \tkzGetPoint{X2}
\tkzInterLL(S,QN2)(BC,CB) \tkzGetPoint{X3}
\tkzInterLL(U,RQ2)(BC,CB) \tkzGetPoint{X4}
\tkzInterLL(U,PR2)(AB,BA) \tkzGetPoint{X5}
\foreach \i/\j in {S/T,U/T,S/X2,S/X3,T/X1,U/X5,U/X4}{
    \tkzDrawSegment[dashed, red](\i,\j)
}
\foreach \i/\j in {MP2/T,T/S,S/MN1}{
    \tkzDefPointWith[orthogonal normed,K=-1](\i,\j) \tkzGetPoint{\i\j}
    \tkzDefPointWith[orthogonal normed](\j,\i) \tkzGetPoint{\j\i}
}
\tikzstyle{mystyle}=[color=red,line width=1pt]
\tkzInterLL(MP2T,TMP2)(AD,DA) \tkzGetPoint{V}
\tkzInterLL(SMN1,MN1S)(CD,DC) \tkzGetPoint{W}
\tkzDrawSegment[mystyle](V,TMP2)
\tkzDrawSegment[mystyle](TS,ST)
\tkzDrawSegment[mystyle](W,SMN1)
\tkzDrawArc[mystyle](T,TMP2)(TS)
\tkzDrawArc[mystyle](S,ST)(SMN1)
\tkzDrawSegment[mystyle](V,DA)
\tkzDrawSegment[mystyle](W,DC)
\tkzDrawArc[mystyle](D,DC)(DA)
\node at (2,-4){\sz $\hat D'_{a'}$};
\end{tikzpicture}
\end{minipage}
\hfill
\begin{minipage}[c]{0.48\textwidth}
The solid points are top-level centers returned by \textsc{Subset-Centers}$(\hat G,V,s)$.
 The dotted lines indicate the partition of $V$ into Voronoi cells by those centers.
 For a center $a$, expanding slightly its Voronoi cell results in $\hat D_a$ (the region inside the outer closed curve).
 On the second level of the recursive call for $\hat D_a$, the hollow points are the centers returned by \textsc{Subset-Centers}$(\hat G,\hat D_a,s)$.
 Observe that some of those centers lie outside $\hat D_a$.
 The dashed lines indicate the partition of $\hat D_a$ into Voronoi cells by those centers.
 Similarly, for a center $a'$, expanding slightly its Voronoi cell results in $\hat D'_{a'}$ (the region inside the inner closed curve).
 Note that every $\hat D'_{a'}$ is inside $\hat D_a$.
 \end{minipage}
\caption{\footnotesize Two levels of recursive calls of \textsc{Verify-Subgraph}$(\hat G,V)$}
 \label{fig:two-levels}
\end{figure}

Next, we analysis the query complexity of \textsc{Verify-Subgraph}$(\hat G,V)$.
\comment{
Here, to provide intuition, we assume that $\Delta=O(1)$ and we bound the query complexity by $\tilde O(n^{4/3})$.

Let $s=n^{1/3}$ and $n_0=\left(4(\Delta^2+1)\right)^3$ be the parameters in \textsc{Verify-Subgraph}.
Consider any recursive call \textsc{Verify-Subgraph}$(\hat G, U)$ where $|U|>n_0$.
Let $A\subseteq V$ be the centers returned by \textsc{Subset-Centers}.
For every $a\in A$, $\hat N_2(a)$ has size at most $\Delta^2+1$, since the graph has maximum degree $\Delta$.
By Lemma~\ref{lem:subset-center}, every $|\hat C(w)\cap U|$ is at most $4|U|/s$.
So the size of every $\hat D_a$ is at most $(\Delta^2+1)\cdot 4|U|/s$.
Therefore by induction, any subset $U$ on the $k^{\rm th}$ level of the recursion has size at most $n\cdot\left(4(\Delta^2+1)/s\right)^{k-1}$.
From the definition of $n_0$ and $s$, we know that there are at most 4 levels.
Consider a subset $U$ on the $k^{\rm th}$ level of the recursion.
If $|U|\leq n_0$, the number of queries is $|U|^2=O(1)$; otherwise the number of queries is $O(|A|\cdot |U|)$, where $|A|\leq 2s\log n$ by Lemma~\ref{lem:subset-center}.
By induction, the number of recursive calls on the $k^{\rm th}$ level of the recursion is at most $(2s\log n)^{k-1}$, so the overall query complexity on this level is $O(n s\log^{k}n)$.
Therefore, the overall query complexity over all of the 4 levels is $O(n s\log^4 n)$, which is $\tilde O(n^{4/3})$.

Now we give the full proof of the complexity stated in Theorem~\ref{thm:verify-bound-OPT}.
}
Define \[k_0=\left\lfloor \sqrt{\frac{\log n}{\log \left(\log n\cdot32(\Delta^2+1)^2\right)}}\right\rfloor.\]
Let $s=n^{1/k_0}$ and $n_0=\left(4(\Delta^2+1)\right)^{k_0}$ be the parameters in \textsc{Verify-Subgraph}.
Consider any recursive call when $|U|>n_0$.
Let $A\subseteq V$ be the centers returned by \textsc{Subset-Centers}.
By Lemma~\ref{lem:subset-center}, $|A|\leq 2s\log n$ and every $|\hat C(w)\cap U|$ is at most $4|U|/s$.
Since the graph has maximum degree $\Delta$, the size of every $\hat D_a$ is at most $(\Delta^2+1)\cdot \max(4|U|/s,1)$.
Therefore by induction, for any $1\leq k\leq k_0+1$, any subset $U$ on the $k^\mathrm{th}$ level of the recursion has size at most $t_k:=n\left(4(\Delta^2+1)/s\right)^{k-1}$, where $t_{k_0+1}=n_0$. 
Hence the maximum level of the recursion is at most $k_0+1$.

First, consider the recursive calls with $|U|\leq n_0$.
There are at most $(2s\log n)^{k_0}$ such calls and each takes $|U|^2\leq \left(4(\Delta^2+1)\right)^{2k_0}$ queries.
So their overall query complexity is at most $n\cdot\left(\log n \cdot 32(\Delta^2+1)^2\right)^{k_0}\leq n^{1+1/k_0}$.

Next, consider the recursive calls with $|U|>n_0$ on the $k^\mathrm{th}$ level of the recursion for some fixed $k\in[1,k_0]$.\footnote{We note that there are no recursive calls on the $(k_0+1)^\mathrm{th}$ level (i.e., last level) of the recursion with $|U|>n_0$.}
There are at most $(2s\log n)^{k-1}$ such calls and each takes at most $(\Delta^2+1)|A|\cdot |U|$ queries, where $|U|\leq t_k$.
So their overall query complexity is at most $n^{1+1/k_0}\left(\log n\cdot 8(\Delta^2+1)\right)^{k}$.
Summing over $k$ from 1 to $k_0$, the query complexity of all recursive calls with $|U|>n_0$ is at most
$2\cdot n^{1+1/k_0}\left(\log n \cdot 8(\Delta^2+1)\right)^{k_0}\leq 2\cdot n^{1+2/k_0}.$

Therefore, the overall query complexity is at most
$3\cdot n^{1+2/k_0}$, which is $n^{1+O\left(\sqrt{(\log \log n+\log \Delta)/\log n}\right)},$
as stated in Theorem~\ref{thm:verify-bound-OPT}.

\begin{remark*}
The recursive algorithm for non-edge verification in this section (as well as the one in Section~\ref{sec:verify-bounded-width}) can be used for verification by itself.
However, we only use its query complexity to provide guarantee for the greedy algorithm in Section~\ref{sec:verif-set-cover}, because the greedy algorithm is much simpler.
\end{remark*}

\subsection{Proof of Theorem~\ref{thm:verify-bound-OPT-bounded-width}}
\label{sec:verify-bounded-width}
To show Theorem~\ref{thm:verify-bound-OPT-bounded-width}, we provide a recursive algorithm for non-edge verification of graphs of bounded treewidth with the query complexity in the theorem statement.
The algorithm first computes $(1/2)$-balanced separator in $\hat G$ and use it to obtain a partition of $V$.
Then it verifies the non-edges of $G$ between different components in the partition.
Finally, it recurses inside each component.
But there is a catch because of the query oracle: by querying a pair $(u,v)$, we would like to get back their distance in the recursive subgraph $H$, but instead the oracle returns their distance in the entire graph $G$.
It could well be that a shortest $u$-to-$v$ path in $G$ goes through two nodes $s_1$ and $s_2$ in the separator where the segment between $s_1$ and $s_2$ is outside $H$.

As a warmup, we first provide an algorithm for the special case of chordal graphs, because the above issue does not arise when the graph is chordal.\footnote{Since the separator is a clique, the shortest $s_1$-to-$s_2$ path is an edge, and thus belongs to $H$.}
We then extend the algorithm to graphs of bounded treewidth: To get around that issue, we formulate the recursive subproblem by augmenting $H$, adding virtual edges between vertices of the separator and giving them weight equal to their distance in $G$.

\subsubsection{Verifying Chordal Graphs.}
\label{sec:verif-chordal}
We have a recursive algorithm to verify that every non-edge of $\hat G$ is a non-edge of $G$ when $G$ is a chordal graph (Algorithm~\ref{alg:verify-chordal}).
The algorithm returns \emph{yes} if and only if every query during the execution gives the right distance.

\begin{algorithm}
\caption{Recursive Verification for Chordal Graphs}
\label{alg:verify-chordal}
\begin{algorithmic}[1]
\Procedure{Verify-Chordal}{$\hat G, U$}
    \If{$|U|>4(\Delta+1)$} 
    	\State $S\gets$ $(1/2)$-balanced clique separator of $\hat{G}[U]$ of size at most $\Delta+1$\label{def_clique_separator}
    	\State \textsc{Query}($S,U$) and obtain $N(S)\cap U$; \textsc{Query}($N(S)\cap U,U$)
    	\For{every component $C$ of $\hat{G}[U]\setminus S$}
        	\textsc{Verify-Chordal}$(\hat{G}, C\cup S)$
    	\EndFor
    \Else
    	 \;\textsc{Query}$(U,U)$
   	\EndIf
\EndProcedure
\end{algorithmic}
\end{algorithm}

Let $U\subseteq V$ represent the set of vertices for which we are currently verifying the induced subgraph.
By Corollary~\ref{cor:exist-clique-balanced}, there is a $(1/2)$-balanced clique separator $S$ of $\hat G[U]$.\footnote{We note that $S$ can be computed in polynomial time and with no queries.}
We confirm the non-edges between different components of $\hat G[U]\setminus S$ by querying every pair $(u,v)\in (N(S)\cap U)\times U$.
Then for each component $C$ of $\hat G[U]\setminus S$, we recursively verify the non-edges inside $\hat G[C \cup S]$.
The recursive call on the subset $C\cup S$ still use the global \textsc{Query} oracle.
But because $S$ is a clique in $G$, for any $u,v\in C\cup S$, any shortest $u$-to-$v$ path in $G$ stays inside $C\cup S$, so the value returned by  \textsc{Query}$(u,v)$ is the distance in $G[C\cup S]$.
The following lemma shows correctness of Algorithm~\ref{alg:verify-chordal} and is a main idea of the algorithm.

\begin{lemma}\label{lemma:check_neighbors}
Assume that $\hat E\subseteq E$.
If $\delta(u,v)= \hat{\delta}(u,v)$ for every $(u,v)\in (N(S)\cap U)\times U$, then there is no edge in $G[U]$ between different components of $\hat G[U]\setminus S$.
\end{lemma}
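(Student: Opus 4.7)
The plan is a proof by contradiction. Assume there is an edge $\{u,v\}\in E$ with $u\in C_1$ and $v\in C_2$ lying in distinct components of $\hat G[U]\setminus S$. Then $u,v\in U\setminus S$, and $\{u,v\}\notin\hat E$ (else they would share a component). Since $S\subseteq N(S)\cap U$, the hypothesis specializes to $\delta(s,x)=\hat\delta(s,x)$ for every $s\in S$ and $x\in U$; combined with $\hat E\subseteq E$ this yields $N(S)\cap U=\hat N(S)\cap U$. In the easy case $u\in N(S)\cap U$ (or symmetrically $v$), the hypothesis applied to the pair $(u,v)$ gives $\hat\delta(u,v)=\delta(u,v)=1$, hence $\{u,v\}\in\hat E$, a contradiction. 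So we may assume $u,v\notin N(S)$.

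For the hard case, let $P=x_0,x_1,\dots,x_k$ be a shortest $\hat G[U]$-path from $u=x_0$ to $v=x_k$. Since $S$ separates $C_1$ from $C_2$ in $\hat G[U]$ and is a clique in $\hat G$, $P$ crosses $S$ at exactly one vertex $x_j=s$ (two $S$-vertices could be replaced by a single clique step). Since $u,v\notin\hat N(S)$, we have $j\geq 2$ and $k-j\geq 2$, so $k\geq 4$, and concatenating $P$ with the edge $\{u,v\}$ yields a cycle $C$ in $G$ of length $k+1\geq 5$. The crux is to show that $s$ has no $G$-chord on $C$: a candidate chord $\{s,x_i\}$ with $i\in\{1,\dots,k-1\}\setminus\{j-1,j+1\}$ would shorten $P$ if it were in $\hat E$, and would violate the hypothesis at the pair $(s,x_i)\in(N(S)\cap U)\times U$ if it were in $E\setminus\hat E$ (since then $\delta(s,x_i)=1<\hat\delta(s,x_i)$); the remaining candidates $\{s,u\}$ and $\{s,v\}$ in $E$ are ruled out because they would place $u$ or $v$ in $N(S)$, outside the hard case. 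Chordality of $G$ now triangulates $C$ by $G$-edges, and since $s$ has no $G$-chord on $C$, its only neighbours in the triangulation are its two cycle-neighbours $x_{j-1}$ and $x_{j+1}$; this forces the triangle $\{s,x_{j-1},x_{j+1}\}$, in particular the edge $\{x_{j-1},x_{j+1}\}\in E$. This edge cannot lie in $\hat E$ (otherwise $P$ would be shortened by replacing $x_{j-1},s,x_{j+1}$ with $x_{j-1},x_{j+1}$). But $x_{j-1}\in\hat N(S)\cap U=N(S)\cap U$, so the hypothesis at $(x_{j-1},x_{j+1})$ requires $\delta(x_{j-1},x_{j+1})=\hat\delta(x_{j-1},x_{j+1})$; yet the edge gives $\delta=1$ while the $\hat G$-path $x_{j-1},s,x_{j+1}$ gives $\hat\delta=2$ — contradiction.

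The main obstacle is the chord-elimination step: systematically ruling out every $G$-chord of $C$ incident to $s$, so that the chordal triangulation of $C$ is forced to place $s$ in a triangle with its two cycle-neighbours. This demands a two-pronged analysis — the shortest-path property of $P$ kills $s$-chords lying in $\hat E$, while the hypothesis (applied at pairs $(s,x_i)$, exploiting $s\in S\subseteq N(S)\cap U$) kills $s$-chords lying in $E\setminus\hat E$. Chordality is indispensable here: it is exactly the chordal triangulation of $C$ that manufactures the forced edge $\{x_{j-1},x_{j+1}\}\in E\setminus\hat E$, which is the witness by which the hypothesis queries detect the inconsistency and complete the contradiction.
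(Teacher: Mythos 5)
Your proof is correct in spirit for chordal $G$, but it takes a genuinely different and substantially more complicated route than the paper, and the difference matters. You use chordality of $G$ and the fact that $S$ is a clique of $G$ in an essential way (building a cycle $C$ in $G$ and invoking the chordal triangulation to force $\{x_{j-1},x_{j+1}\}\in E$). The paper's proof is a short direct argument: it picks, in each of the two components $X\ni x$ and $Y\ni y$, the $N(S)$-vertex closest to $x$ (resp.\ $y$) in $\hat G[U]$, call them $a,b$; the hypothesis transfers $\hat\delta$ to $\delta$ on the pairs $(a,y)$ and $(b,y)$, and since any shortest $\hat G[U]$-path from $a$ to $y$ must cross $S$, one gets $\delta(a,y)\ge 2+\delta(b,y)\ge 2+\delta(a,x)$, whence $\delta(x,y)\ge 2$ by the triangle inequality. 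No chordality, no cycles. This extra generality is not cosmetic: Section~\ref{sec:verif-extention} reuses Lemma~\ref{lemma:check_neighbors} verbatim for graphs of bounded treewidth, where $G$ is \emph{not} chordal and $S$ is a bag of a tree decomposition, \emph{not} a clique of $G$ (it is only made into a clique in the auxiliary weighted graph by adding $F$-edges). Your argument breaks there, so it does not actually prove the lemma at the level of generality the paper needs it.

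There is also a local gap. You assert that the shortest $\hat G[U]$-path $P$ ``crosses $S$ at exactly one vertex,'' justified by ``two $S$-vertices could be replaced by a single clique step.'' That only handles two \emph{non-adjacent} $S$-vertices; $P$ can legitimately contain two \emph{consecutive} $S$-vertices $s_1,s_2$ (take $u$ much closer to $s_1$ and $v$ much closer to $s_2$), and these cannot be eliminated by a clique shortcut since the shortcut $s_1s_2$ is already the edge used. The argument can be patched (run the chord-elimination at $s_1$, note $x_{i-1}\in \hat N(S)$, and the forced chord $\{x_{i-1},s_2\}\in E\setminus\hat E$ again contradicts the hypothesis), but as written there is a missing case. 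Separately, note that your proof, like the paper's, silently relies on $\hat G[U]$ being connected and on $U$ being self-contained in $\hat G$ so that shortest paths in $\hat G[U]$ realize $\hat\delta$; these hold in the algorithm's invariant but are worth flagging.
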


\begin{proof}
Let $X$ and $Y$ be any two different components in the partition of $\hat{G}[U]\setminus S$.
Let $x$ be any vertex in $X$ and $y$ be any vertex in $Y$.
We show that $\{x,y\}$ is not an edge in $G[U]$.
Let $a$ (resp. $b$) be the vertex in $N(S)$ that is closest to $x$ (resp. $y$) in $\hat{G}[U]$.
Then $a\in X$ and $b\in Y$.
Since $\hat E\subseteq E$, we have $\hat N(S)\subseteq N(S)$.
It is then easy to see that $a,\, b \, \in (N(S)\cap U) \setminus S$.
Without loss of generality, assume $\delta(a,x) \leq \delta(b,y)$.

Since $(a,y)\in (N(S)\cap U)\times U$, we have $\delta(a,y)=\hat \delta(a,y)$.
Any shortest path in $\hat G[U]$ from $a$ to $y$ goes through $S$, so
$$ \hat{\delta}(a,y)\geq \hat{\delta}(a,S)+\hat\delta(S,y)=\hat{\delta}(a,S)+1+\hat\delta(b,y)= 2+\hat\delta(b,y).$$
Since $(b,y)\in (N(S)\cap U)\times U$, we have $\hat \delta(b,y)= \delta(b,y)$.
Therefore $\delta(a,y)\geq 2+\delta(b,y)\geq 2+\delta(a,x)$. 
By the triangle inequality, $\delta(x,y)\geq \delta(a,y)-\delta(a,x)\geq 2$.
Thus $\{x,y\}$ is not an edge in $G[U]$.
\end{proof}

Since $\hat G[U]$ has maximum degree $\Delta$ and $S$ has size at most $\Delta+1$, \textsc{Query}($S,U$) and \textsc{Query}($N(S)\cap U,U$) use $O(\Delta^2|U|)$ queries.
Let $q(m)$ be the number of queries of \textsc{Verify-Chordal}($\hat{G},U$) when $|U|=m$.
We have
$$q(|U|)=O(\Delta^2|U|)+\sum_C q(|C|+|S|),$$
where $|U|=|S|+\sum_C |C|$ and $S$ is a $(1/2)$-balanced separator.
Hence $q(n)=O(\Delta^2n\log n)$.

\begin{remark*}
We note that there are simpler algorithms for verifying chordal graphs, but the algorithm presented here conveys ideas that can be extended to verify graphs of bounded treewidth.
\end{remark*}

\subsubsection{Verifying Graphs of Bounded Treewidth.}
\label{sec:verif-extention}
We extend Algorithm~\ref{alg:verify-chordal} to graphs of treewidth $\tw$.
The input specification is now the graph $\hat G$, a subset $U\subseteq V$, {\em plus} a set $F$ of additional, new edges $\{u,v\}$ with weight $\delta (u,v)$. 
The set $F$ is initially empty, and increases during the recursion.
The algorithm verifies whether the metric of $(U,\hat E[U]\cup F[U])$ is identical to that of $(U,E[U]\cup F[U])$.
Instead of $S$ being a clique, now $S$ is an existing bag of some tree decomposition of width $\tw$ (see Lemma~\ref{lemma:exist-balanced}).
Lemma~\ref{lemma:check_neighbors} still holds.
We create new edges $\{u,v\}$ with weight $w(u,v):=\delta (u,v)$ for all pairs $\{u,v\}\subseteq S$, and we add them to the set $F$. 
For each connected component $C$ of $\hat G[U]\setminus S$, we make a recursive call for the vertex set $C\cup S$ and the updated set $F$ of weighted edge.
Every subgraph in the recursive call has treewidth at most $\tw$, since the new edges are added inside $S$.
This concludes the description and correctness of the algorithm.

For the query  complexity, we need to bound the size of the neighborhood $N(S)$ of $S$: it is with respect to the subgraph $E[U]\cup F[U]$, so the vertex degree is no longer bounded by $\Delta$. 
However, for any vertex $v$, the number of weighted edges adjacent to $v$ is bounded by the maximum bag size times the number of bags $S$ containing $v$ that have been used as separators in the recursive calls. 
Since the graph has treewidth $\tw$, every bag has size at most $\tw+1$.
Since all separators are $(1/2)$-balanced, the recursion has depth $O(\log n)$, so $v$ belongs to $O(\log n)$ such bags.
Therefore, the degree of $v$ is $O(\Delta+\tw\log n)$. 
The overall query complexity is $O(\Delta(\Delta+\tw\log n)n\log n)$.

Thus we proved Theorem~\ref{thm:verify-bound-OPT-bounded-width}.

\section{Proof of Theorems~\ref{thm:recons-sp}}
\label{sec:recon}
The algorithm (Algorithm~\ref{alg:recons-greedy}) constructs an increasing set $X$ of edges so that in the end $X=E$.
At any time, the candidate graph is $X$.\footnote{We identify $X$ with the subgraph induced by the edges of $X$.}
Initially, $X$ is the union of the shortest paths given as answers by $n-1$ queries, so that $X$ is a connected subgraph spanning $V$.
At each subsequent step, the algorithm makes a query that leads either to the confirmation of many non-edges of $G$, or to the discovery of an edge of $G$.

Formally, we define, for every pair $(u,v)\in V^2$, 
\begin{equation}
\label{eqn:SRuv}
S^X_{u,v}=\big\{\{a,b\} \text{ is an non-edge of } X: \delta_X(u,a)+\delta_X(b,v)+1<\delta_X(u,v)\big\}.
\end{equation}
This is similar to $S_{u,v}$ defined in Equation~\eqref{eqn:Suv}.
From Lemma~\ref{lem:Suv}, $S^X_{u,v}$ contains the pairs that can be confirmed as non-edges of $G$ if $\delta_G(u,v)=\delta_X(u,v)$.
At each step, the algorithm queries a pair $(u,v)$ that maximizes the size of the set $S^X_{u,v}\setminus Y$.
As a consequence, either all pairs in $S^X_{u,v}\setminus Y$ are confirmed as non-edges of $G$, or $\delta_G(u,v)\neq \delta_X(u,v)$, and in that case, the query reveals an edge along a shortest $u$-to-$v$ path in $G$ that is not in $X$; we then add this edge to $X$.

\begin{algorithm}[t]
\caption{Greedy Reconstruction}
\label{alg:recons-greedy}
\begin{algorithmic}[1]
\Procedure{Reconstruct}{$V$}
    \State$u_0\gets$ an arbitrary vertex
    \For{$u\in V\setminus\{u_0\}$}
        \textsc{Query}$(u,u_0)$ to get a shortest $u$-to-$u_0$ path
    \EndFor
    \State $X\gets$ the union of the above paths
    \State$Y\gets\emptyset$
    \While{$X\cup Y$ does not cover all vertex pairs}
        \State choose $(u,v)$ that maximizes $|S^X_{u,v}\setminus Y|$ \Comment{$S^X_{u,v}$ defined in Equation~\eqref{eqn:SRuv}}\label{line:maximize}
        \State Query($u,v$) to get a shortest $u$-to-$v$ path \label{line:query}
        \If{$\delta_G(u,v)=\delta_X(u,v)$}
            \State $Y\gets Y\cup S^X_{u,v}$		\label{line:NE-update}
        \Else
            \State let $e$ be some edge of the above $u$-to-$v$ path that does not belong to $X$
            \State $X\gets X\cup \{e\}$			\label{line:add-edge}
        \EndIf
    \EndWhile
    \State \textbf{return} $X$
\EndProcedure
\end{algorithmic}
\end{algorithm}

To see the correctness, we note that the algorithm maintains the invariant that all pairs in $X$ are confirmed edges of $G$, and that all pairs in $Y$ are confirmed non-edges of $G$.
Thus when $X\cup Y$ covers all vertex pairs, we have $X=E$.

For the query complexity, first, consider the queries that lead to $\delta_G(u,v)\neq \delta_X(u,v)$.
For each such query, an edge is added to $X$.
This can happen at most $|E|\leq \Delta n$ times, because the graph has maximum degree $\Delta$.

Second, consider the queries that lead to $\delta_G(u,v)= \delta_X(u,v)$.
Define $R$ to be the set of vertex pairs that are not in $X\cup Y$.
We analyze the size of $R$.
For each such query, the size of $R$ decreases by $|S^X_{u,v}\setminus Y|$.
To lower bound $|S^X_{u,v}\setminus Y|$, we consider the problem of non-edge verification using a distance oracle on the input graph $X$, and let $T$ be an (unknown) optimal set of queries.
By Theorem~\ref{thm:verify-bound-OPT}, $|T|$ is at most $f(n,\Delta)=n^{1+O\left(\sqrt{(\log \log n+\log \Delta)/\log n}\right)}$.
By Lemma~\ref{lem:cover}, the sets $S^X_{u,v}$ for all pairs $(u,v)\in T$ together cover $R\cup Y$, hence $R$.
Therefore, at least one of these pairs satisfies $|S^X_{u,v}\setminus Y|\geq |R|/|T|$.
Initially, $|R|\leq n(n-1)/2$, and right before the last query, $|R|\geq 1$, thus the number of queries with $\delta_G(u,v)=\delta_X(u,v)$ is $O(\log n)\cdot f(n,\Delta)$.

Therefore, the overall query complexity is at most $(n-1)+\Delta n+O(\log n)\cdot f(n,\Delta)$.
Thus we obtained the same query bound as in the first statement of Theorem~\ref{thm:verify-dist}.
To prove the query bound for graphs of treewidth $\tw$ as in the second statement, the analysis is identical as above, except that we use Theorem~\ref{thm:verify-bound-OPT-bounded-width} instead of Theorem~\ref{thm:verify-bound-OPT} to obtain $f(n,\Delta)$.

\begin{remark*}
Note that the above proof depends crucially
on the fact that $f(n,\Delta)$ is a uniform bound on the number of distance queries for non-edge verification of {\em any} $n$-vertex graph of maximum degree $\Delta$. Thus, even though the graph $X$ changes during the course of the algorithm because of queries $(u,v)$ such that $\delta_G(u,v)\neq \delta_X(u,v)$, each query for which the distance in $G$ and the current $X$ are equal confirms $1/f(n,\Delta)$ fraction of non-edges.
\end{remark*}

\section{Proof of Theorem~\ref{thm:recons-chordal}}
\label{sec:recons-chordal}

The algorithm for Theorem~\ref{thm:recons-chordal} uses a clique separator to partition the graph into balanced subgraphs, and then recursively reconstructs each subgraph.
The main difficulty is to compute the partition.
The partition algorithm and its analysis are the main novelty in this section, see Section~\ref{sec:partition}.
In what follows, the set $U$ represents the set of vertices for which we are currently reconstructing the induced subgraph during the recursion.

\begin{definition}
A subset of vertices $U\subseteq V$ is said to be \emph{self-contained} if, for every pair of vertices $(x,y)\in U^2$, any shortest path in $G$ between $x$ and $y$ goes through nodes only in $U$.
\end{definition}

The set $U$ during the recursion is always \emph{self-contained}, because every separator is a clique.

\subsection{Subroutine: Computing the Partition}
\label{sec:partition}
Let $U$ be a self-contained subset of $V$.
Let $S$ be a subset of $U$.
We want to compute the partition of $G[U]\setminus S$ into connected components.
Let $W=(N(S)\cap U)\setminus S$.
For every $a\in W$, define $B(a)$ as the \emph{cluster} at $a$:
\begin{eqnarray}
\label{eqn:Ba}
B(a)=\{x\in U\setminus S\mid\delta(a,x)\leq \delta(S,x)\}.
\end{eqnarray}
Since $U$ is self-contained, every $x\in U\setminus S$ belongs to some cluster $B(a)$.
However, the clusters may have overlaps.
The algorithm (see Algorithm~\ref{alg:partition}) successively merges two clusters with overlaps.
See Figure~\ref{fig:partition}.

\begin{algorithm}
\caption{Computing the Partition}
\label{alg:partition}
\begin{algorithmic}[1]
\Function{Partition}{$U,S$}
    \State\textsc{Query}($S,U$) and obtain $N(S)\cap U$; \textsc{Query}($N(S)\cap U,U$)
    \State $W\gets (N(S)\cap U)\setminus S$
    \State $\mathcal{B}\gets\{B(a)\mid a\in W\}$  \Comment{$B(a)$ defined in Equation~\eqref{eqn:Ba}}
    \While{$\exists$ $B_1,B_2\in \mathcal{B}$ s.t. $B_1\cap B_2\neq \emptyset$}\label{merge_loop}
         merge $B_1$ and $B_2$ in $\mathcal{B}$
    \EndWhile
    \State \textbf{return} $\mathcal{B}$
\EndFunction
\end{algorithmic}
\end{algorithm}

\tikzstyle{peers}=[draw,circle,blue,bottom color={blue!50},
                  top color= white, text=black,minimum width=20pt]
\begin{figure}[ht]
\begin{minipage}[c]{0.48\textwidth}
\centering
\scalebox{0.55}{
\large
\begin{tikzpicture}[auto, thick]
  \foreach \place/\name in {{(0,0)/b}, {(0,3)/a}, {(4,0)/e}, {(4,2)/d}, {(4,4)/c}}
    \node[peers] (\name) at \place {$\name$};
  \foreach \place/\name in {{(2,0)/2}, {(2,3)/1}}
    \node[peers] (s\name) at \place {$s_\name$};
  \foreach \source/\dest in {a/s1, b/s2, s1/c, s1/d,s2/e,s1/s2}
    \path (\source) edge (\dest);
  \node[rectangle, rounded corners, red, draw, minimum width=1.5cm, minimum height=4cm] at (2,1.5){};

  \node[circle, draw, minimum width=3.5cm] at (-1.2,0){$B(b)$};
  \node[circle, draw, minimum width=3.5cm] at (-1.2,3){$B(a)$};

  \node[circle, draw, minimum width=3.5cm] at (4+1,-0.5){$B(e)$};
  \node[circle, draw, minimum width=3.5cm] at (4+1.2,2){$B(d)$};
  \node[circle, draw, minimum width=3.5cm] at (4.8,4.5){$B(c)$};

  \node[] at (2,3.8) {\color{red} \LARGE $S$};
\end{tikzpicture}
}
\end{minipage}
\hfill
\begin{minipage}[c]{0.48\textwidth}
In the example, $S=\{s_1,s_2\}$ and $W=\{a,b,c,d,e\}$.
The clusters $B(a)$, $B(b)$, $B(c)$, $B(d)$, $B(e)$ are indicated by the balls. 
Using their overlaps, the algorithm produces the partition $\mathcal{B}=\{B(a)\cup B(b)\,,\,B(c)\cup B(d)\cup B(e)\}$.
\end{minipage}
\caption{Example of the Partition}
\label{fig:partition}
\end{figure}

\begin{lemma}
\label{lem:partition}
Algorithm \textsc{Partition} uses $O(\Delta|S|\cdot|U|)$ queries and outputs the partition of $G[U]\setminus S$ into connected components.
\end{lemma}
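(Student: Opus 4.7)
The plan has two parts: (a) bound the number of queries, and (b) prove correctness.

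For (a), the only queries made are in the first line: \textsc{Query}$(S,U)$ costs $|S|\cdot|U|$, determines the distances from $S$ to every vertex of $U$, and hence identifies $N(S)\cap U = \{u\in U:\delta(S,u)=1\}$. Since $G$ has maximum degree $\Delta$, we have $|N(S)\cap U|\le\Delta|S|$, so \textsc{Query}$(N(S)\cap U,U)$ costs at most $\Delta|S|\cdot|U|$, giving $O(\Delta|S|\cdot|U|)$ total. These queries together supply $\delta(a,x)$ for every $a\in W\cup S$ and every $x\in U$, and because $U$ is self-contained, distances in $G$ and in $G[U]$ agree, so the sets $B(a)$ and the value $\delta(S,x)$ needed in Equation~\eqref{eqn:Ba} can be computed with no further queries; the merging loop is purely combinatorial.

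For (b), I would establish three properties from which the conclusion follows immediately:
\begin{enumerate}
\item[(i)] For every $x\in U\setminus S$, some cluster $B(a)$ with $a\in W$ contains $x$.
\item[(ii)] Each cluster $B(a)$ is contained in a single connected component of $G[U]\setminus S$.
\item[(iii)] For every edge $\{u,v\}$ of $G[U]\setminus S$, some cluster $B(a)$ contains both $u$ and $v$.
\end{enumerate}
Given (i)--(iii), (i) says the clusters cover $U\setminus S$; (ii) implies (by induction on the number of merges) that every block of $\mathcal{B}$ in the final partition is contained in a connected component of $G[U]\setminus S$; and (iii), applied along any path in a component, implies that vertices of the same component are linked by a chain of overlapping clusters and therefore end up in the same merged block.

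The first two properties are easy. For (i), let $y\in S$ realize $\delta(S,x)$, pick $a$ to be the neighbor of $y$ on a shortest $y$-to-$x$ path (which stays in $U$ by self-containment, and cannot belong to $S$ lest it contradict the minimality of $\delta(S,x)$); then $a\in W$ and $\delta(a,x)=\delta(S,x)-1$. For (ii), if $x\in B(a)$ and a shortest $a$-to-$x$ path met some $s\in S$, the triangle inequality would give $\delta(a,x)\ge1+\delta(S,x)$, contradicting $x\in B(a)$; hence $a$ and $x$ lie in the same component of $G[U]\setminus S$. Property (iii) is the main obstacle and is where the choice of cluster must be made carefully. Assuming WLOG $\delta(S,u)\le\delta(S,v)$, I pick $a\in W$ as in the argument for (i), applied to $u$, so that $\delta(a,u)=\delta(S,u)-1$ and $u\in B(a)$. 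Then by the triangle inequality and $\delta(u,v)=1$,
\[
\delta(a,v)\le\delta(a,u)+1=\delta(S,u)\le\delta(S,v),
\]
so $v\in B(a)$ as well. Combining (i)--(iii) yields that the output equals the connected components of $G[U]\setminus S$, completing the proof.
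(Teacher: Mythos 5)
Your proof is correct, and for the ``coverage'' direction it takes a genuinely different and noticeably simpler route than the paper. The paper splits the correctness claim into two lemmas: one (Lemma~\ref{lemma:connected2}) showing each final block lies inside a single component, which matches your property~(ii) plus the induction on merges; and another (Lemma~\ref{lemma:connected1}) showing each component is contained in a single final block, which the paper proves by a rather elaborate argument. There, for a component $C$, the paper builds a complete weighted graph $H$ on $A = C\cap W$ with weights equal to distances inside $G[C]$, runs Prim's algorithm on $H$, and for each tree edge $(a,b)$ shows that the \emph{midpoint} $z$ of a shortest $a$-to-$b$ path in $G[C]$ lies in both $B(a)$ and $B(b)$ --- a case analysis that requires the minimum-weight choice made by Prim. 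Your property~(iii) replaces all of this: you show directly that every edge $\{u,v\}$ of $G[U]\setminus S$ is covered by a single cluster, namely the cluster $B(a)$ built from the nearest separator-neighbor of the endpoint closer to $S$; then a walk along any path in a component yields the chain of overlapping clusters needed for the merging loop. Your observation~(iii) is the cleaner key lemma, and arguably the ``right'' one, since it makes the connectivity argument immediate and avoids both the auxiliary weighted graph and the midpoint analysis. The remaining parts (query count, self-containment justifying that queried distances are the subgraph distances, and property~(i) via self-containment) match the paper's reasoning. One small slip: $|N(S)\cap U|\le(\Delta+1)|S|$ rather than $\Delta|S|$ (since $S\subseteq N(S)$), but this does not affect the claimed $O(\Delta|S|\cdot|U|)$ bound.
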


The query complexity of the algorithm is $O(|N(S)|\cdot|U|)= O(\Delta|S|\cdot|U|)$.
Lemma~\ref{lem:partition} then follows directly from Lemmas~\ref{lemma:connected1}~and~\ref{lemma:connected2}.

\begin{lemma}
\label{lemma:connected1}
Let $C$ be a connected component in $G[U]\setminus S$.
Then $C\subseteq B$ for some set $B$ in the output of the algorithm.

\end{lemma}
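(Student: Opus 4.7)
The plan is to reduce the lemma to two intermediate claims: (i) every vertex of $U\setminus S$ lies in at least one cluster $B(a)$ with $a\in W$, and (ii) for every edge $\{x,y\}$ of $G[U\setminus S]$, there is a single $a\in W$ such that both $x$ and $y$ lie in $B(a)$. Given both claims, the conclusion is immediate: pick any two vertices $x,y\in C$ and connect them by a path in $G[U]\setminus S$; by (ii) consecutive vertices on this path share a cluster, so the while-loop of Algorithm~\ref{alg:partition} merges all clusters along the path into one element of $\mathcal B$. By (i) every vertex of $C$ is covered, so this element must contain all of $C$.

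For (i), I would fix $z\in U\setminus S$ and let $s^*\in S$ achieve $\delta(s^*,z)=\delta(S,z)$. Since $s^*,z\in U$ and $U$ is self-contained, a shortest $s^*$-to-$z$ path in $G$ lies entirely in $U$; let $a$ be the vertex immediately after $s^*$ on this path. Then I would check that $a\in W$ (it is a neighbour of $s^*\in S$ inside $U$, and cannot belong to $S$ without yielding a strictly shorter path from $S$ to $z$), and that $\delta(a,z)=\delta(S,z)-1\leq \delta(S,z)$, so $z\in B(a)$.

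For (ii), take an edge $\{x,y\}$ of $G[U\setminus S]$ with, say, $\delta(S,x)\leq \delta(S,y)$. Applying the construction of (i) to $x$ produces some $a\in W$ with $\delta(a,x)=\delta(S,x)-1$. Then $x\in B(a)$ is immediate, and the triangle inequality gives $\delta(a,y)\leq \delta(a,x)+1=\delta(S,x)\leq \delta(S,y)$, so $y\in B(a)$ as well.

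The one point that requires genuine care is verifying that the vertex $a$ produced by the construction actually lies in $W$: this is where the self-containment hypothesis is essential (to keep $a$ inside $U$) and where the minimality of $\delta(S,z)$ is used to rule out $a\in S$. Once (i) and (ii) are in place, the merging argument and the triangle-inequality step for adjacent vertices are entirely routine.
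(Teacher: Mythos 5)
Your proof is correct, and it takes a genuinely simpler route than the paper. The paper's argument defines a weighted auxiliary graph $H$ on $A = C\cap W$ (with weights equal to distances inside $G[C]$), runs Prim's algorithm on $H$, and shows by a two-case analysis that each time Prim adds an edge $(a,b)$, the clusters $B(a)$ and $B(b)$ overlap at the midpoint of a shortest $a$-to-$b$ path in $G[C]$; this proves all of $\{B(a):a\in A\}$ eventually merge, and correctness follows because each vertex of $C$ lies in some $B(a)$ with $a\in A$. Your version replaces the MST machinery with a local, edge-by-edge observation: every edge $\{x,y\}$ of $G[C]$ is already contained in a single cluster $B(a)$ (taking $a$ on a shortest $S$-to-$x$ path for the endpoint $x$ with smaller $\delta(S,\cdot)$, so that $\delta(a,x)=\delta(S,x)-1$ and $\delta(a,y)\le \delta(S,x)\le\delta(S,y)$). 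Walking a path between any two vertices of $C$ then traverses a chain of pairwise-overlapping clusters, so they land in the same output set, and your claim~(i) guarantees every vertex (including the $|C|=1$ case) is covered by some cluster. Both arguments need self-containment in the same place — to keep the shortest path from $S$ inside $U$ — and both rule out $a\in S$ by minimality of $\delta(S,\cdot)$. Your approach avoids Prim's algorithm and the case split entirely, and in exchange uses $a\in W$ rather than $a\in A=C\cap W$ (which is harmless: only the set $B(a)$ matters, not the location of the index $a$).
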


\begin{proof}
Let $A$ be the set of vertices in $C \cap W$.
Since $U$ is self-contained, for every vertex $x\in C$, there exists some $a\in A$ such that $x\in B(a)$.
Thus we only need to prove that all sets $\{B(a): a\in A\}$ are eventually merged in our algorithm.

Define a weighed graph $H$ whose vertex set is $A$, and such that for every $(a,b)\in A^2$, there is an edge $(a,b)$ in $H$ with weight $w(a,b)$, which is defined as the distance between $a$ and $b$ in $G[C]$\footnote{This distance may be larger than $\delta(a,b)$, the distance between $a$ and $b$ in $G$.}.
To show that all sets $\{B(a): a\in A\}$ are eventually merged, we use an inductive proof that is in the same order that Prim's algorithm would construct a minimum
spanning tree on $H$.
Recall that Prim's algorithm initializes a tree $\mathcal{T}$ with a single vertex, chosen arbitrarily from $A$.
Then it repeatedly chooses an edge $(a,b)\in \mathcal{T}\times(A\setminus \mathcal{T})$ with minimum weight and add this edge to $\mathcal{T}$.
We will show that if an edge $(a,b)$ is added to $\mathcal{T}$, then $B(a)$ and $B(b)$ are merged in our algorithm.
Since Prim's algorithm finishes by providing a spanning tree including every $a\in A$, we thus proved that all sets $B(a)$ for $a\in A$ are merged in our algorithm.

Suppose that the $i$ unions corresponding to the first $i$ edges chosen by Prim's algorithm have been performed already, for $i\geq 0$.
Let $\mathcal{T}$ be the tree in $H$ after adding the first $i$ edges.\footnote{For the base case ($i=0$), $\mathcal{T}$ contains a single vertex and no union operation is performed.}
Let $(a,b)$ be the $(i+1)^{\rm th}$ edge chosen by Prim's algorithm.
Thus $a\in \mathcal{T}$, $b\in A\setminus \mathcal{T}$, and $w(a,b)$ is minimized.
Consider a shortest path $p_1,\dots, p_k$ in $G[C]$ between $a$ and $b$.
Let $z=p_{\lceil k/2\rceil}$ be the mid-point vertex of the path.
We show that both $B(a)$ and $B(b)$ contain $z$, thus $B(a)$ and $B(b)$ are merged in our algorithm.
It is easy to see that $p_1,\dots,p_{\lceil k/2\rceil}$ and $p_{\lceil k/2\rceil},\dots,p_k$ are shortest paths in $G$.
Thus $\delta(a,z)=\lceil k/2 \rceil -1$ and $\delta(b,z)=\lfloor k/2\rfloor$.
So we have $\delta(a,z)\leq \delta(b,z)\leq \delta(a,z)+1$.
To show $z\in B(a)$ and $z\in B(b)$, we only need to show that $ \delta(b,z)\leq \delta(S,z)$.
Choose the vertex $s\in S$ that minimizes $\delta(s,z)$ and consider a shortest $z$-to-$s$ path $P$.
Let $c$ be the neighbor of $s$ on $P$, and let $P'$ be the shortest $z$-to-$c$ path.
We note that $c\in A$ and $P'$ is in $G[C]$. 
Since $\delta(S,z)=\delta(s,z)=\delta(c,z)+1$, we only need to show that $\delta(b,z)\leq \delta(c,z)+1$.
There are 2 cases:

Case 1: $c\in A\setminus \mathcal T$. Then the concatenation of $p_1,\dots,p_{\lceil k/2\rceil}$ and $P'$ gives a path in $G[C]$ between $a$ and $c$ of length $\delta(a,z)+\delta(c,z)$, which is at least $w(a,c)$ by the definition of the weight.
From the choice of $(a,b)$, $w(a,c)\geq w(a,b)=\delta(a,z)+\delta(b,z)$.
So we have $\delta(b,z)\leq \delta(c,z)$.

Case 2: $c\in \mathcal T$. Similarly, the concatenation of $p_k,p_{k-1},\dots,p_{\lceil k/2\rceil}$ and $P'$ gives a path in $G[C]$ between $b$ and $c$ of length $\delta(b,z)+\delta(c,z)$, which is at least $w(b,c)$ by the definition of the weight.
From the choice of $(a,b)$, $w(b,c)\geq w(a,b)=\delta(a,z)+\delta(b,z)$.
So we have $\delta(a,z)\leq \delta(c,z)$.
Thus $\delta(b,z)\leq \delta(a,z)+1\leq \delta(c,z)+1$.
\end{proof}

\begin{lemma}
\label{lemma:connected2}
Let $B$ be a set in the output of the algorithm. Then $B\subseteq C$ for some connected component $C$ in $G[U]\setminus S$.
\end{lemma}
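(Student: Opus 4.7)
The plan is to prove the statement by induction on the sequence of merge operations carried out by \textsc{Partition}. The base case will assert that every initial cluster $B(a)$ (for $a \in W$) lies in a single connected component of $G[U]\setminus S$. The inductive step will be essentially free: if $B_1$ and $B_2$ each lie in a single connected component and they share at least one vertex (the trigger for the merge on line~\ref{merge_loop}), then they live in the \emph{same} component, so $B_1 \cup B_2$ still lies in a single connected component. Iterating, any set $B$ returned by the algorithm is contained in some connected component $C$ of $G[U]\setminus S$.

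The only real content, then, is the base case. Fix $a \in W$ and any $x \in B(a)$; the goal is to exhibit a path from $a$ to $x$ entirely inside $G[U]\setminus S$. Because $U$ is self-contained, every shortest $a$-to-$x$ path in $G$ already lies inside $U$, so it suffices to show that such a path avoids the separator $S$. I would argue this by contradiction: suppose some $s\in S$ lies on a shortest $a$-to-$x$ path in $G$. Then
\[
\delta(a,x) \;=\; \delta(a,s) + \delta(s,x) \;\geq\; 1 + \delta(s,x) \;\geq\; 1 + \delta(S,x) \;>\; \delta(S,x),
\]
where the first inequality uses that $a\in W\subseteq U\setminus S$ forces $\delta(a,s)\geq 1$. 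This contradicts the defining property $\delta(a,x)\leq \delta(S,x)$ of membership in $B(a)$. Hence a shortest $a$-to-$x$ path stays inside $G[U]\setminus S$, so $a$ and $x$ sit in the same connected component of $G[U]\setminus S$. Since $a\in B(a)$ (because $\delta(a,a)=0\leq\delta(S,a)$), the entire cluster $B(a)$ is contained in a single connected component.

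Combining the base case with the inductive step, any set $B$ produced by \textsc{Partition} is a union of initial clusters that are chained together by shared vertices, and all of these clusters belong to a common connected component of $G[U]\setminus S$. This gives $B\subseteq C$, as required. I do not anticipate any real obstacle here: the key insight is simply that self-containment of $U$ together with the inequality $\delta(a,x)\leq \delta(S,x)$ forces every shortest $a$-to-$x$ path to steer clear of $S$, which is exactly what the definition of $B(a)$ was set up to guarantee.
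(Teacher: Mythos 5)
Your proof is correct and takes essentially the same approach as the paper: establish that each initial cluster $B(a)$ lies in a single connected component of $G[U]\setminus S$ via the inequality $\delta(a,x)\geq 1+\delta(S,x)$ (contradicting $\delta(a,x)\leq\delta(S,x)$), then induct over the merge operations using the shared vertex. The only cosmetic difference is that the paper frames the base case as a contradiction starting from ``$a$ and $x$ are in different components'' while you directly show no shortest $a$-to-$x$ path meets $S$; the two are interchangeable.
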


\begin{proof}
First we show that for every $a\in W$ and every $x\in B(a)$, $a$ and $x$ belong to the same component in $G[U]\setminus S$.
Suppose there exists some $x\in B(a)$, such that $x$ and $a$ belong to different components in  $G[U]\setminus S$.
Any shortest path from $a$ to $x$ must pass through the separator $S$, so we have $\delta(a,x)\geq \delta(a,S)+\delta(S,x)=1+\delta(S,x)$.
Contradiction with $x\in B(a)$.

Next we prove an invariant on $\mathcal{B}$ during the \textbf{while} loop (Line~\ref{merge_loop}):
\emph{Every set $B\in\mathcal{B}$ is a subset of some component of $G[U]\setminus S$.}
This invariant holds before the \textbf{while} loop starts.
Suppose the invariant holds before the $i^{\rm th}$ iteration of the \textbf{while} loop, and in this iteration $B_1, B_2\in\mathcal{B}$ get merged.
Since $B_1\cap B_2\neq \emptyset$, there exists $z\in B_1\cap B_2$.
All nodes in $B_1$ (resp. in $B_2$) are in the same component as $z$.
Thus all nodes in $B_1\cup B_2$ are in the same component as $z$.
By induction, the invariant holds when the \textbf{while} loop terminates.

Thus we complete the proof.
\end{proof}

\subsection{Subroutine: Computing a Shortest Path}
\label{sec:shortest path}

Given a self-contained subset of vertices $U\subseteq V$ and two vertices $a,b\in U$, Algorithm~\ref{alg:shortest} computes a shortest path between $a$ and $b$ by divide-and-conquer.
The query complexity is $O(|U|\log |U|)$.
See Appendix~A.1 of~\cite{mathieu2013graph} for the analysis of the algorithm.

    \begin{algorithm}[t]
    \caption{Finding a Shortest Path}
    \label{alg:shortest}
    \begin{algorithmic}[1]
    \Function{Shortest-Path}{$U,a,b$}
        \If{$\delta(a,b)>1$}
        \State\textsc{Query}($a,U$); \textsc{Query}($b,U$)
        \State $T\gets\{v\in U\mid \delta(v,a)+\delta(v,b)=\delta(a,b)\}$
        \State $l\gets \lfloor\delta(a,b)/2\rfloor$
        \State $c\gets$ an arbitrary node in $T$ such that $\delta(c,a)=\ell$
        \State $U_1\gets\{v\in T \mid \delta(v,a)<\ell\}$
        \State $U_2\gets\{v\in T \mid \delta(v,a)>\ell\}$
        \State $P_1\gets\textsc{Shortest-Path}(U_1,a,c)$
        \State $P_2\gets\textsc{Shortest-Path}(U_2,c,b)$
        \State \textbf{return} the concatenation of $P_1$ and $P_2$
        \Else
            \State\textbf{return} the path of a single edge $(a,b)$
        \EndIf
    \EndFunction
    \end{algorithmic}
    \end{algorithm}

\subsection{Algorithm and Analysis}
\label{sec:recons-chordal-algo}
The reconstruction algorithm is in Algorithm~\ref{alg:recons-chordal}.
To find a balanced separator, we use ideas from~\cite{mathieu2013graph}: the algorithm computes a vertex that is on many shortest paths in the sampling, and grows a clique including this vertex.
The constants $n_0$, $C_1$, and $0<\beta<1$ are defined later.

\begin{algorithm}
\caption{Reconstruction of Chordal Graphs}
\label{alg:recons-chordal}
\begin{algorithmic}[1]
\Procedure{Reconstruct-Chordal}{$U$}
    \If{$|U|> n_0$}
        \State $K\gets$\textsc{Balanced-Separator}$(U)$
        \State $(U_1,\dots,U_\ell)\gets$\textsc{Partition}$(U,K)$ \Comment{See Algorithm~\ref{alg:partition}}
        \State \textbf{return} $\bigcup_i$\textsc{Reconstruct-Chordal}$(U_i\cup K)$
    \Else
        \State reconstruct $G[U]$ by \textsc{Query}$(U,U)$
    \EndIf
\EndProcedure
\Function{Balanced-Separator}{$U$}  \Comment{finds a $\beta$-balanced separator of $G[U]$}
\Repeat
        \For{$i\gets 1$ \textbf{to} $C_1\log |U| $} 
            \State $(a_i,b_i)\gets$ a pair of  uniformly random nodes from $U$
            \State $P_i\gets\textsc{Shortest-Path}(a_i,b_i,U)$ \label{shortest_path_loop_end}\Comment{see Section~\ref{sec:shortest path}}
        \EndFor
        \State $x\gets$ the node in $U$ with the most occurrences among all $P_i$'s \label{def_x}
        \State\textsc{Query}$(x,U)$ and obtain $N(x)$
        \State\textsc{Query}($N(x),N(x)$) and obtain all cliques containing $x$
        \For{every clique $K$ containing $x$}\label{check_clique}
            \State$(U_1,\dots,U_\ell)\gets\textsc{Partition}(U,K)$ \Comment{See Algorithm~\ref{alg:partition}}
            \If{$\max_i |U_i|<\beta |U|$}
                \textbf{return} $K$
            \EndIf
        \EndFor
    \Until{a balanced separator is found}
\EndFunction
\end{algorithmic}
\end{algorithm}

\begin{lemma}
\textsc{Reconstruct-Chordal}$(U)$ indeed returns the edge set of $G[U]$.
\end{lemma}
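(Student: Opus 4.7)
The plan is to prove correctness by strong induction on $|U|$, maintaining the invariant (already observed in the section preamble) that $U$ is self-contained. In the base case $|U| \le n_0$, the algorithm queries every pair in $U^2$, and an edge of $G[U]$ is exactly a pair at distance $1$, so $E(G[U])$ is recovered directly.

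For the inductive step, I would first argue that whenever \textsc{Balanced-Separator}$(U)$ returns a set $K$, it returns a clique that is a $\beta$-balanced separator of $G[U]$: the inner \textbf{for} loop ranges over cliques containing $x$ (enumerated by querying $N(x)\times N(x)$), and the explicit check $\max_i |U_i| < \beta|U|$ guarantees the balanced property. By Lemma~\ref{lem:partition}, \textsc{Partition}$(U,K)$ then correctly produces the connected components $U_1,\ldots,U_\ell$ of $G[U]\setminus K$.

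The heart of the proof is to show that each $U_i \cup K$ is again self-contained, which allows the inductive hypothesis to apply to the recursive calls. Fix $x,y \in U_i \cup K$, and suppose for contradiction that some shortest $x$-to-$y$ path $P$ in $G$ passes through a vertex $z \notin U_i \cup K$. Because $U$ is self-contained, $z \in U$, so $z$ lies in some other component $U_j$ with $j \neq i$; tracing $P$, it must enter $U_j$ through some vertex $k_1 \in K$ and leave through some $k_2 \in K$, with the sub-path between $k_1$ and $k_2$ having length at least $2$. Since $K$ is a clique of $G$, $\{k_1,k_2\}$ is an edge, and replacing the sub-path by this single edge yields a strictly shorter $x$-to-$y$ path, contradicting the shortest-path property of $P$. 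This is the main obstacle of the proof, and it is precisely the reason the separator is required to be a clique rather than merely a vertex set.

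To conclude, I would observe that $\bigcup_i E(G[U_i \cup K]) = E(G[U])$: any edge $\{u,v\}$ with both endpoints in $K$ lies in every $E(G[U_i \cup K])$, while any edge with an endpoint outside $K$ has both endpoints in a common component $U_i$ (since distinct components are non-adjacent in $G[U]\setminus K$), and hence lies in $E(G[U_i \cup K])$. Combining with the inductive hypothesis applied to each strictly smaller self-contained subset $U_i \cup K$ then completes the argument.
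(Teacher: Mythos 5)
Your proof is correct and follows essentially the same plan as the paper's: use Lemma~\ref{lem:partition} to get the connected components, observe that no edge of $G[U]$ crosses between components, so the edge sets of $G[U_i\cup K]$ cover $E(G[U])$, and recurse. The one place you go beyond the paper is in explicitly proving that each $U_i \cup K$ remains self-contained; the paper merely asserts this in the section preamble (``The set $U$ during the recursion is always self-contained, because every separator is a clique.''), whereas your clique-shortcut argument ($k_1,k_2\in K$ with $\delta(k_1,k_2)=1$ shortcutting any excursion into $U_j$) is the correct justification of that assertion and is needed in order to apply Lemma~\ref{lem:partition} inside the recursion.
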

\begin{proof}
By Lemma~\ref{lem:partition}, $U_1,\dots,U_\ell$ are the connected components in $G[U]\setminus K$.
There cannot be edges between different $U_i$ and $U_j$.
Thus every edge of $G[U]$ belongs to some $G[U_i\cup K]$.
So the edge set of $G[U]$ is the union of the edge sets of $G[U_i\cup K]$ over $i$.
Hence correctness follows by induction.
\end{proof}

The rest of this section is to analyze the query complexity.
We set the constants $n_0=2^{\Delta+2}(\Delta+1)^2$;  $\beta=\max\left(1-1/(\Delta\cdot 2^{\Delta+1}),\sqrt{1-1/(4(\Delta+1))}\right)$; and $C_1=256(\Delta+1)^2$.
The key is the following lemma.

\begin{lemma}
\label{lem:success-proba}
In every \textbf{repeat} loop of \textsc{Balanced-Separator}, a $\beta$-balanced separator is found with probability at least $2/3$.
\end{lemma}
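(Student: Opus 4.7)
The plan is to establish two things: (a) with probability at least $2/3$ over the internal randomness of a single repeat loop, the vertex $x$ selected in Line~\ref{def_x} lies on the oracle-returned shortest path between an $\Omega(1/\Delta)$ fraction of random vertex pairs of $U$; and (b) whenever $x$ has this property, some clique containing $x$ is a $\beta$-balanced separator of $G[U]$. Since the loop in Line~\ref{check_clique} exhaustively tests all cliques through $x$ (at most $2^\Delta$ of them, as $|N(x)| \le \Delta$), part (b) immediately yields a successful return in that iteration.

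For (a), invoke Corollary~\ref{cor:exist-clique-balanced} to fix a clique $S^\ast \subseteq U$ of size at most $\Delta+1$ that is a $(1/2)$-balanced separator of $G[U]$. Each connected component of $G[U] \setminus S^\ast$ has at most $|U|/2$ vertices, so for a uniformly random pair $(a,b) \in U^2$, the probability that $a,b$ both lie in $U \setminus S^\ast$ and in distinct components is at least $1/4$, using $|S^\ast|/|U| = O(1/(2^{\Delta+2}(\Delta{+}1))) = o(1)$ because $|U| \geq n_0$. In that event every shortest $a$-to-$b$ path in $G$ must meet $S^\ast$, so pigeonholing over $|S^\ast| \leq \Delta+1$ produces some $s^\ast \in S^\ast$ that lies on the sampled path $P_i$ with probability at least $p := 1/(4(\Delta+1))$. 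With $N = C_1 \log |U|$ samples and $C_1 = 256(\Delta+1)^2$, a Chernoff lower-tail bound puts the observed count of $s^\ast$ at least $pN/2$ with probability $1 - |U|^{-\Omega(\Delta+1)}$, hence the selected $x$ (the argmax) has observed count at least $pN/2$ as well. A matching Chernoff upper-tail bound combined with a union bound over the at most $|U|$ candidate vertices then shows that, with overall probability at least $2/3$, the true frequency $p_x$ with which $x$ lies on the sampled shortest path of a random pair is at least $\epsilon := 1/(16(\Delta+1))$.

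For (b), use the clique tree $T$ of $G[U]$ provided by Lemma~\ref{lemma:chordal}; the maximal cliques containing $x$ form a connected subtree $T_x \subseteq T$. Suppose for contradiction that every maximal clique $K \in T_x$ has a component of $G[U] \setminus K$ of size more than $\beta |U|$. Since $\beta > 1/2$ this ``heavy side'' is unique for each $K$, inducing an orientation on the edges of $T_x$ together with the edges from $T_x$ exiting into $T \setminus T_x$. A standard centroid-style argument on $T_x$ picks out a bag $K^\ast \in T_x$ whose heavy side either (i) exits $T_x$ through an edge into $T \setminus T_x$, in which case the endpoints of any shortest path through $x$ must lie in $U$ minus that heavy side of size $> \beta|U|$ plus the $\leq \Delta$ vertices of $K^\ast$, bounding the pairs through $x$ by essentially $(1-\beta)|U| \cdot |U|$; or (ii) points along $T_x$, in which case each of the $\leq 2^\Delta$ bags of $T_x$ between $K^\ast$ and the heavy direction can be ``peeled'' losing only $1/(\Delta \cdot 2^{\Delta+1})$ of $|U|$. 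Combining cases, the number of pairs whose shortest path meets $x$ is at most $(1-\beta^2)|U|^2$, which contradicts $p_x \geq \epsilon$ once $\beta \geq \sqrt{1 - \epsilon/2}$ and $\beta \geq 1 - 1/(\Delta \cdot 2^{\Delta+1})$; the algorithm's definition of $\beta$ satisfies both bounds.

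The main obstacle will be (b): nailing down the centroid argument on $T_x$ and extracting the exact quantitative relation between $\epsilon$ and $\beta$ with parameters that match those hardcoded in Algorithm~\ref{alg:recons-chordal}. By contrast, (a) is a routine Chernoff-and-union-bound calculation once the separator $S^\ast$ is fixed by Corollary~\ref{cor:exist-clique-balanced}, and the enumeration of all $\leq 2^\Delta$ cliques through $x$ needed once (b) is established is already built into the algorithm.
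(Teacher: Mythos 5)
Your high-level plan (show $p_x$ is large with good probability, then argue that a large $p_x$ forces some clique through $x$ to be a balanced separator) matches the spirit of the paper's proof, but the proposal has two genuine gaps.

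\textbf{Part (b) is not actually proven.} The centroid-style orientation argument on $T_x$ is only gestured at, and you explicitly flag it as the main obstacle. The paper avoids this entirely with a cleaner structure: take the clique tree $T$, let $T_x$ be the (connected) subtree of bags containing $x$, and let $F = T \setminus T_x$ be the forest of remaining components. The case analysis is on the \emph{components of $F$}, not on the bags of $T_x$. If some component $T'$ of $F$ has $(1-\beta)|U| \le |V(T')| \le \beta|U|$, then for the tree edge $(K_1, K_2)$ with $K_1 \in T_x$, $K_2 \in T'$, the clique $K_1 \supseteq K_1 \cap K_2$ is a $\beta$-balanced separator and contains $x$, so the algorithm succeeds. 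If some $T'$ has $|V(T')| > \beta|U|$, then all pairs inside $V(T')$ have shortest paths avoiding $x$ (using that $U$ is self-contained and the separator is a clique), so $p_x \le 1-\beta^2 \le 1/(4(\Delta+1))$, an event of probability $\le 1/3$. Exhaustiveness of these two cases is a short counting argument: $F$ has at most $\Delta \cdot 2^\Delta$ components, so if all were smaller than $(1-\beta)|U|$ we would get $|V(F)| < \Delta 2^\Delta (1-\beta)|U| \le |U|/2$, contradicting $|V(F)| \ge |U| - (\Delta+1) > |U|/2$. No centroid argument, no orientation of $T_x$, no per-bag ``peeling.''

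\textbf{Your constants do not close the argument.} You derive $p_x \ge \epsilon = 1/(16(\Delta+1))$, but the algorithm hardcodes $\beta = \max\bigl(1 - 1/(\Delta 2^{\Delta+1}),\; \sqrt{1-1/(4(\Delta+1))}\bigr)$, which only guarantees $1-\beta^2 \le 1/(4(\Delta+1))$. Since $1/(16(\Delta+1)) < 1/(4(\Delta+1))$, the contradiction ``$p_x > 1-\beta^2$'' simply does not hold. (For small $\Delta$, say $\Delta = 3$, neither term in the $\max$ exceeds the threshold $\sqrt{1-1/(32(\Delta+1))}$ you would need.) The paper's Lemma~\ref{lemma:sample accuracy} gives the tighter $p_x > (\max_v p_v)/2 \ge 1/(4(\Delta+1))$ that exactly matches the hardcoded $\beta$; your factor-of-four-looser Chernoff derivation needs to be sharpened to that level for the parameters to line up.
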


We defer the proof of Lemma~\ref{lem:success-proba} to Section~\ref{sec:proof-success-proba} and show in the rest of this section how Lemma~\ref{lem:success-proba} implies the query complexity stated in Theorem~\ref{thm:recons-chordal}.

First we analyze the query complexity of \textsc{Balanced-Separator}.
Computing $C_1\log |U|$ shortest paths takes $O(\Delta^2|U|\log^2|U|)$ queries, since a shortest path between two given nodes can be computed using $O(|U|\log|U|)$ queries (see Section~\ref{sec:shortest path}).
We note that the neighborhood $N(x)$ of $x$ has size at most $\Delta+1$, and there are at most $2^\Delta$ cliques containing $x$.
By Lemma~\ref{lem:partition}, \textsc{Partition}$(U,K)$ takes $O(\Delta|K|\cdot|U|)$ queries, where $|K|\leq \Delta+1$.
Therefore every \textbf{repeat} loop in \textsc{Balanced-Separator} takes $O\left(\Delta^2|U|(2^\Delta+\log^2|U|)\right)$ queries.
By Lemma~\ref{lem:success-proba}, the expected number of \textbf{repeat} loops is constant.
So the query complexity of \textsc{Balanced-Separator} is $O\left(\Delta^2|U|(2^\Delta+\log^2|U|)\right)$.

Next, we analyze the query complexity of \textsc{Reconstruct-Chordal}$(U)$.
Let $q(m)$ be the number of queries when $|U|=m$.
We have 
$$q(|U|)=O\left(\Delta^2|U|(2^\Delta+\log^2|U|)\right)+\sum_i q(|U_i|+|K|),$$
where $|U|=|K|+\sum_i |U_i|$ and $K$ is a $\beta$-balanced separator of size at most $\Delta+1$.
Hence $q(n)=O\left(\Delta^2n(2^\Delta+\log^2n)\log_{\frac{1}{\beta}}n\right)=O\left(\Delta^3 2^\Delta\cdot n(2^\Delta+\log^2n)\log n\right)$.

\subsection{Proof of Lemma~\ref{lem:success-proba}}
\label{sec:proof-success-proba}
First, we need Lemmas~\ref{lemma:most popular node} and~\ref{lemma:sample accuracy}.
\begin{lemma}
\label{lemma:most popular node}
For $v\in U$, let $p_v$ denote the fraction of pairs $(a,b)\in U^2$ such that $v$ is on some shortest path between  $a$ and $b$.
Then $\max_v p_v \geq 1/(2(\Delta+1))$.
\end{lemma}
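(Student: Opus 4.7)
The plan is to exhibit a single small set $K \subseteq U$ that intersects a shortest path for a constant fraction of pairs in $U^2$, and then average the contribution over $v \in K$. The natural candidate is a balanced clique separator: I would apply Corollary~\ref{cor:exist-clique-balanced} to the induced subgraph $G[U]$, which is chordal (as an induced subgraph of the chordal graph $G$) and has maximum degree at most $\Delta$. This yields a clique $K \subseteq U$ of size at most $\Delta+1$ that is a $(1/2)$-balanced separator of $G[U]$.

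The key observation exploits the self-containedness of $U$: for every $a,b \in U$, shortest $a$-$b$ paths in $G$ coincide with shortest $a$-$b$ paths in $G[U]$. Consequently, whenever $a$ and $b$ lie in $U \setminus K$ in different connected components of $G[U]\setminus K$, every shortest $a$-$b$ path in $G$ must cross $K$ and therefore contains some $v \in K$. For pairs with $a \in K$ or $b \in K$, the endpoint in $K$ itself is a vertex of $K$ that lies on every $a$-$b$ path. Thus the only pairs $(a,b) \in U^2$ for which no vertex of $K$ is on any shortest $a$-$b$ path are those with both endpoints in $U \setminus K$ and in the same component of $G[U]\setminus K$.

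It remains to count. Let the components of $G[U]\setminus K$ have sizes $c_1, \ldots, c_\ell$. The $(1/2)$-balanced property gives $\max_i c_i \leq |U|/2$, hence
\[
\sum_i c_i^2 \;\leq\; \frac{|U|}{2}\sum_i c_i \;\leq\; \frac{|U|^2}{2}.
\]
Combining with the previous paragraph, at least $|U|^2 - |U|^2/2 = |U|^2/2$ ordered pairs $(a,b) \in U^2$ admit a shortest path through some vertex of $K$, so
\[
\sum_{v \in K} p_v \;\geq\; \frac{1}{2}.
\]
Averaging over $v \in K$ and using $|K| \leq \Delta+1$ yields $\max_{v \in K} p_v \geq 1/(2(\Delta+1))$, which is the desired bound.

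The only subtle step is the second paragraph: the self-containedness of $U$ is precisely what ensures that $K$ being a separator of $G[U]$ implies shortest $G$-paths between cross-component pairs cannot bypass $K$ by detouring outside $U$. Without that hypothesis the argument fails. Everything else is a short counting argument, and I do not anticipate further difficulty.
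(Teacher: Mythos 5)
Your proof is correct and follows essentially the same route as the paper's: a $(1/2)$-balanced clique separator $K$ of $G[U]$ from Corollary~\ref{cor:exist-clique-balanced}, the observation that cross-component pairs force a shortest path through $K$, a count showing at least $|U|^2/2$ such pairs, and pigeonhole/averaging over the at most $\Delta+1$ vertices of $K$. Your version is slightly more careful in two respects the paper leaves implicit: you explicitly invoke the self-containedness of $U$ to rule out shortest paths that detour outside $U$, and you explicitly account for pairs with an endpoint inside $K$ when establishing the $|U|^2/2$ bound.
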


\begin{proof}
By Corollary~\ref{cor:exist-clique-balanced}, there is some clique separator $S$ of size at most $\Delta+1$ such that every connected component in $G[U]\setminus S$ has size at most $|U|/2$.
Notice that for any pair of vertices $a,b$ from different components, any shortest $a$-to-$b$ path must go by some node in $S$.
The number of such pairs is at least $|U|^2/2$.
By Pigeonhole Principle, there exists some $z\in S$, such that for at least $1/ |S|\geq 1/(\Delta+1)$ fraction of these pairs, their shortest paths go by $z$.
Thus $p_z\geq 1/(2(\Delta+1))$.
\end{proof}

\begin{lemma}[slightly adapted from~\cite{mathieu2013graph}]
\label{lemma:sample accuracy}
For ever vertex $v\in U$, let $\hat{p}_v$ denote the fraction of pairs $(a_i,b_i)$ among $C_1\log |U|$ uniformly and independently random pairs of $U^2$ such that $v$ is on some shortest path between $a_i$ and $b_i$.
Let $x=\arg\max \hat{p}_x$.
Then with probability at least $2/3$, we have $p_x>(\max_v p_v)/2$.
\end{lemma}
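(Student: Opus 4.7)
The plan is to apply a standard Chernoff--Hoeffding concentration bound to each estimator $\hat p_v$, then union bound over $v\in U$, and finally combine with Lemma~\ref{lemma:most popular node} to compare $p_x$ to $\max_v p_v$. Let $N=C_1\log|U|$, and for each $v\in U$ write $\hat p_v N$ as a sum of $N$ independent indicator variables, each with expectation $p_v$. The additive Chernoff--Hoeffding bound gives $\Pr[\,|\hat p_v-p_v|>\epsilon\,]\le 2e^{-2\epsilon^2 N}$ for any $\epsilon>0$.

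I would then choose the additive error tolerance carefully using Lemma~\ref{lemma:most popular node}. Set $p^*=\max_v p_v$, so $p^*\ge 1/(2(\Delta+1))$ by that lemma. Pick $\epsilon=1/(16(\Delta+1))$. With $C_1=256(\Delta+1)^2$, the exponent becomes $2\epsilon^2 N = 2\log|U|$, so the single-vertex failure probability is at most $2|U|^{-2}$. A union bound over the at most $|U|$ vertices bounds the probability that \emph{some} $\hat p_v$ deviates from $p_v$ by more than $\epsilon$ by $2/|U|$, which is at most $1/3$ whenever $|U|\ge 6$ (and in particular whenever $|U|>n_0$, where the lemma is applied).

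Conditioning on the event that $|\hat p_v-p_v|\le\epsilon$ for every $v\in U$, I would show $p_x>p^*/2$ by contradiction. Let $v^*$ be a vertex attaining $p_{v^*}=p^*$, and suppose $p_x\le p^*/2$. Then on one hand $\hat p_x\le p_x+\epsilon\le p^*/2+\epsilon$; on the other hand, by the choice of $x$ as the empirical maximizer, $\hat p_x\ge \hat p_{v^*}\ge p^*-\epsilon$. Combining these two inequalities gives $p^*\le 4\epsilon=1/(4(\Delta+1))$, which contradicts $p^*\ge 1/(2(\Delta+1))$ from Lemma~\ref{lemma:most popular node}.

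There is no real obstacle here, only bookkeeping: the whole argument is a calibration of constants. The key is that Lemma~\ref{lemma:most popular node} provides a uniform lower bound $p^*\ge 1/(2(\Delta+1))$ which scales like $1/\Delta$, so an additive error of $\epsilon=\Theta(1/\Delta)$ suffices to distinguish a vertex with $p_v>p^*/2$ from one with $p_v\le p^*/2$; matching this requires $N=\Theta(\Delta^2)\log|U|$ samples, which is exactly what the constant $C_1=256(\Delta+1)^2$ delivers.
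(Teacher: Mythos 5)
Your proof is correct and is the natural argument: additive Chernoff--Hoeffding on each $\hat p_v$, a union bound over the $|U|$ vertices, and then a contradiction using the lower bound $p^* \ge 1/(2(\Delta+1))$ from Lemma~\ref{lemma:most popular node}. The paper itself does not reproduce a proof here (it cites~\cite{mathieu2013graph}), but the constants $C_1 = 256(\Delta+1)^2$ and $\beta$ in Section~\ref{sec:recons-chordal-algo} are exactly calibrated for the argument you give, and your bookkeeping checks out: with $\epsilon = 1/(16(\Delta+1))$ and $N = C_1\log|U|$ one gets $2\epsilon^2 N = 2\log|U|$, a per-vertex failure probability at most $2|U|^{-2}$, a union-bound failure probability at most $2/|U| \le 1/3$ once $|U| > n_0$, and on the good event $p_x \le p^*/2$ would force $p^* \le 4\epsilon = 1/(4(\Delta+1))$, contradicting $p^* \ge 1/(2(\Delta+1))$. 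So this is a valid self-contained proof filling in what the paper leaves to the citation.
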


Now we prove Lemma~\ref{lem:success-proba}.
By Lemma~\ref{lemma:chordal}, there is a tree decomposition $T$ of $G[U]$ such that every bag of $T$ is a unique maximal clique of $G[U]$.
Let $x$ be the node computed on Line~\ref{def_x} of Algorithm~\ref{alg:recons-chordal}.
Let $T_x$ be the subtree of $T$ induced by the bags containing $x$.
Define $F$ to be the forest after removing $T_x$ from $T$.
For any subgraph $H$ of $T$, define $V(H)\subseteq U$ to be the set of vertices that appear in at least one bag of $H$.

\emph{Case 1:}
There exists some connected component $T'$ in $F$ with $(1-\beta)|U|\leq |V(T')|\leq \beta |U|$.
Consider the edge $(K_1,K_2)$ in $T$ such that $K_1\in T_x$ and $K_2\in T'$.
$K_1\cap K_2$ is a $\beta$-balanced separator, since $V(T')$ is a component in $G[U]\setminus(K_1\cap K_2)$.
Thus $K_1\supseteq K_1\cap K_2$ is also a $\beta$-balanced separator.
Observe that $x\in K_1$, so $K_1$ is one of the cliques checked on Line~\ref{check_clique}.
The algorithm succeeds by finding a $\beta$-balanced separator.

\emph{Case 2:}
There exists some connected component $T'$ in $F$ with $|V(T')|>\beta |U|$.
The algorithm then fails to find a $\beta$-balanced separator.
We bound the probability of this case by at most $1/3$.
Again let $(K_1,K_2)$ be the edge in $T$ such that $K_1\in T_x$ and $K_2\in T'$.
For any vertices $u,v\in V(T')$, any shortest $u$-to-$v$ path cannot go by $x$.
Since there are at least $\beta^2$ fraction of such pairs in $U^2$, we have $p_x\leq 1-\beta^2$, which is at most $1/(4(\Delta+1))$ by the definition of $\beta$.
This happens with probability at most $1/3$ by Lemmas~\ref{lemma:most popular node} and \ref{lemma:sample accuracy}.

We argue that the two cases above are exhaustive.
Suppose, for the sake of contradiction, that every component $T'$ in $F$ is such that $|V(T')|<(1-\beta)|U|$.
The number of components in $F$ is at most $\Delta\cdot 2^\Delta$, because every component has a bag that contains a neighbor of $x$, and all bags are unique.
So $|V(F)|< \Delta\cdot 2^\Delta \cdot (1-\beta)|U|$, which is at most $|U|/2$ by the definition of $\beta$.
On the other hand, every node $v\in U\setminus N(x)$ is covered by some clique in $F$, so $|V(F)|\geq |U|-(\Delta+1)$, which is greater than $|U|/2$ since $|U|> n_0$.
Contradiction.

Thus we complete the proof of Lemma~\ref{lem:success-proba}.

\section{Lower Bounds}

\subsection{Lower Bound for Graphs of Unbounded Degree}
\label{sec:lower-bound-unbounded}
Reconstruction of graphs of unbounded degree using a distance oracle requires $\Omega(n^2)$ queries~\cite{reyzin2007}.
This lower bound can be easily extended to verification or/and to the shortest path oracle model as follows.
Consider the graph $G$ of vertices $v_1,\dots,v_n$, which contains a star: it has an edge $\{v_1,v_i\}$ for every $2\leq i\leq n$.
$G$ may or may not contain one additional edge $\{v_i,v_j\}$ for $2\leq i, j\leq n$.
(In the verification version of the problem, the star graph is given as $\hat G$.)
To detect if $G$ contains such an edge $\{v_i,v_j\}$ for $2\leq i,j\leq n$, we need to perform $\Omega(n^2)$ distance or shortest path queries.

\subsection{Lower Bound for Reconstruction of Bounded Degree Graphs}
\label{sec:information-lower-bound}
We assume that $n=3 t-1$ where $t=2^k$ for some integer $k$. (The general case is similar.)
Consider a family $\mathcal{G}$ of graphs $G$ as follows: the vertex set is $\{v_1,\dots,v_n\}$; the first $2t-1$ vertices form a complete binary tree of height $k$ (with leaves $v_t,\dots,v_{2t-1}$); the other vertices $v_{2t},\dots,v_{3t-1}$ induce an arbitrary subgraph of maximum degree $\Delta-1$; there is an edge between $v_i$ and $v_{i+t}$ for every $i\in[t,2t-1]$ and there are no other edges.
Then every vertex in $G$ has degree at most $\Delta$, and the diameter of the graph is at most $2k+2$.
Every distance query returns a number between 1 and $2k+2=O(\log n)$, so it gives $O(\log\log n)$ bits of information.
From information theory, the number of queries is at least the logarithm of the number of graphs in $\mathcal{G}$ divided by the maximum number of bits of information per query.
The number of graphs in $\mathcal{G}$ is the number of graphs of size $t$ and of maximum degree $\Delta-1$, which is $\Omega\left(n^{\Omega(\Delta n)}\right)$ when $\Delta=o(\sqrt{n})$ (see \cite{mckay1991asymptotic}).
Therefore, we have a query lower bound of \[\frac{\log \left(\Omega\left(n^{\Omega(\Delta n)}\right)\right)}{O(\log\log n)}=\Omega\left(\frac{\Delta n\log n}{\log\log n}\right).\]

\subsubsection*{Acknowledgments.}
We thank Uri Zwick for Theorem~\ref{thm:information-lower-bound}.
We thank Fabrice Benhamouda, Mathias Bæk Tejs Knudsen, and Mikkel Thorup for discussions.

\bibliographystyle{splncs03}
\bibliography{reference}
\end{document}